\newcommand{\veps}{\varepsilon}
\newcommand{\eps}{\varepsilon}
\newcommand{\MDA}{\mathsf{MinDisAgree}}
\newcommand{\MA}{\mathsf{MaxAgree}}
\newcommand{\mpr}{m^{\prime}}
\newcommand{\Vpr}{V^{\prime}}
\newcommand{\Epr}{E^{\prime}}
\newcommand{\OA}{\mathcal{A}}
\newcommand{\etal}{{\em et~al.}}
\title{Approximate Correlation Clustering Using Same-Cluster Queries}
\author{Nir Ailon\thanks{Nir Ailon acknowledges the generous support of ISF grant number 2021408}\inst{1} \and Anup Bhattacharya\thanks{Anup Bhattacharya acknowledges the support of TCS fellowship at IIT Delhi.}\inst{2} \and Ragesh Jaiswal\thanks{Ragesh Jaiswal acknowledges the support of ISF-UGC India-Israel Grant 2014.}\inst{2}}
\institute{
Technion, Haifa, Israel.\thanks{Email address: \email{nailon@cs.technion.ac.il}}
\and
Department of Computer Science and Engineering, \\
Indian Institute of Technology Delhi.\thanks{Email addresses: \email{\{anupb, rjaiswal\}@cse.iitd.ac.in}}
}
\date{}
\begin{document}
\maketitle

\begin{abstract} 
Ashtiani et al. (NIPS 2016) introduced a semi-supervised framework for clustering (SSAC) where a learner is allowed to make {\em same-cluster} queries. More specifically, in their model, there is a query oracle that answers queries of the form {``\em given any two vertices, do they belong to the same optimal cluster?''}. In many clustering contexts, this kind of oracle queries are feasible. Ashtiani et al. showed the usefulness of such a query framework by giving a polynomial time algorithm for the $k$-means clustering problem where the input dataset satisfies some {\em separation} condition. Ailon et al. extended the above work to the approximation setting by giving an efficient $(1+\veps)$-approximation algorithm for $k$-means for any small $\veps > 0$ and any dataset within the SSAC framework. In this work, we extend this line of study to the {\em correlation clustering} problem. Correlation clustering is a graph clustering problem where pairwise similarity (or dissimilarity) information is given for every pair of vertices and the objective is to partition the vertices into clusters that minimise the disagreement (or maximises agreement) with the pairwise information given as input. These problems are popularly known as $\mathsf{MinDisAgree}$ and $\mathsf{MaxAgree}$ problems, and $\mathsf{MinDisAgree}[k]$ and $\mathsf{MaxAgree}[k]$ are versions of these problems where the number of optimal clusters is at most $k$. There exist {\em Polynomial Time Approximation Schemes} (PTAS) for $\mathsf{MinDisAgree}[k]$ and $\mathsf{MaxAgree}[k]$ where the approximation guarantee is $(1+\veps)$ for any small $\veps$ and the running time is polynomial in the input parameters but exponential in $k$ and $1/\veps$. We get a significant running time improvement within the SSAC framework at the cost of making a small number of same-cluster queries. We obtain an $(1+ \veps)$-approximation algorithm for any small $\veps$ with running time that is polynomial in the input parameters and also in $k$ and $1/\veps$. We also give non-trivial upper and lower bounds on the number of same-cluster queries, the lower bound being based on the {\em Exponential Time Hypothesis} (ETH). Note that the existence of an efficient algorithm for $\mathsf{MinDisAgree}[k]$ in the SSAC setting exhibits the power of same-cluster queries since such polynomial time algorithm (polynomial even in $k$ and $1/\veps$) is not possible in the classical (non-query) setting due to our conditional lower bounds. Our conditional lower bound is particularly interesting as it not only establishes a lower bound on the number of same cluster queries in the SSAC framework but also establishes a conditional lower bound on the running time of any $(1+\veps)$-approximation algorithm for $\mathsf{MinDisAgree}[k]$.
\end{abstract}

\begin{section}{Introduction} 
Correlation clustering is a graph clustering problem where we are given similarity or dissimilarity information for pairs of vertices. The input is a graph $G$ on $n$ vertices. Edges of $G$ are labeled as similar (positive) or dissimilar (negative). The clustering objective is to partition the vertices into clusters such that edges labeled `positive' remain within clusters and `negative' edges go across clusters. However, this similarity/dissimilarity information may be inconsistent with this objective. For example, there may exist vertices $u, v, w$ such that edges $(u, v)$ and $(u, w)$ are labeled `positive' whereas edge $(v,w)$ is labeled `negative'. In this case, it is not possible to come up with a clustering of these $3$ vertices that would {\em agree} with all the edge labels. The objective of correlation clustering is to come up with a clustering that minimises disagreement or maximises agreement with the edge labels given as input. The minimisation version of the problem, known as $\mathsf{MinDisAgree}$, minimises the sum of the number of negative edges present inside clusters and the number of positive edges going across clusters. Similarly, the maximisation version is known as $\mathsf{MaxAgree}$ where the objective is to maximise the sum of the number of positive edges present inside clusters and the number of negative edges going across clusters. Unlike $k$-means or $k$-median clustering, in correlation clustering, there is no restriction on the number of clusters formed by the optimal clustering. When the number of optimal clusters is given to be at most $k$, these problems are known as $\mathsf{MinDisAgree}[k]$ and $\mathsf{MaxAgree}[k]$ respectively.

Bansal et al.~\cite{BBC2004} gave a constant approximation algorithm for $\MDA$ and a PTAS for $\MA$. Subsequently, Charikar et al.~\cite{CGW2005} improved approximation guarantee for $\MDA$ to $4$, and showed that $\MDA$ is $\mathsf{APX}$-hard. These results are for correlation clustering on complete graphs as it is known for general graphs, it is at least as hard as {\em minimum multi-cut} problem~\cite{CGW2005}. Since $\MDA$ is $\mathsf{APX}$-hard \cite{CGW2005}, additional assumptions were introduced for better results. For example \cite{MS2010,MMV2015} studied $\MDA$ where the input is noisy and comes from a semi-random model. When $k$ is given as part of the input, Giotis and Guruswami \cite{GG2006} gave a PTAS for $\MDA[k]$.

Recently there have been some works \cite{MM2016,Bl2016} with a {\em beyond-worst case} flavour where polynomial time algorithms for $\mathsf{NP}$-hard problems have been designed under some stability assumptions. Ashtiani et al.~\cite{AKB2016} considered one such stability assumption called {\em $\gamma$-margin}. They introduced a semi-supervised active learning (SSAC) framework and within this framework, gave a probabilistic polynomial time algorithm for $k$-means on datasets that satisfy the $\gamma$-margin property. More specifically, their SSAC framework involves a query oracle that answers queries of the form {\em ``given any two vertices, do they belong to the same optimal cluster?''}. The query oracle responds with a Yes/No answer where these answers are assumed to be consistent with some fixed optimal solution. In this framework, they studied the query complexity for polynomial time algorithms for $k$-means on datasets satisfying the $\gamma$-margin property. Ailon et al.~\cite{ABJK2017} extended this work to study query complexity bounds for $(1+\eps)$-approximation for $k$-means in SSAC framework for any small $\eps>0$ without any stability assumption on the dataset. They gave almost matching upper and lower bounds on the number of queries for $(1+\veps)$-approximation of $k$-means problem in SSAC framework. 

In this work, we study $\MDA[k]$ in the SSAC framework, where the optimal clustering has at most $k$ clusters and give upper and lower bounds on the number of same-cluster queries for $(1+\eps)$-approximation for correlation clustering for any $\eps>0$. We also give upper bounds for $\MA[k]$. Our algorithm is based on the PTAS by Giotis and Guruswami \cite{GG2006} for $\MDA[k]$. The algorithm by Giotis and Guruswami involves random sampling a subset $S$ of vertices and considers all possible ways of partitioning $S$ into $k$ clusters $S=\{S_1,\ldots,S_k\}$, and for every such $k$-partitioning, clusters the rest of the vertices greedily. Every vertex $v \in V\setminus S$ is assigned a cluster $S_j$ that maximizes its agreement with the edge labels. Their main result was the following.

\begin{theorem}[Giotis and Guruswami~\cite{GG2006}] 
For every $k\geq 2$, there is a PTAS for $\MDA[k]$ with running time $n^{O(9^k/\eps^2)}\log n$.
\end{theorem}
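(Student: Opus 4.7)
The plan is to analyse the sample-and-guess algorithm sketched in the text. Fix an optimal clustering with clusters $C_1^\p,\dots,C_k^\p$. The algorithm draws a uniform random sample $S \subseteq V$ of size $s = \Theta(9^k \log n / \eps^2)$, enumerates all $k^s$ ordered partitions $(S_1,\dots,S_k)$ of $S$ into $k$ labelled parts, and for each such guess extends it to a clustering of $V$ by assigning every $v \in V \setminus S$ to the label $j$ that maximises a scaled agreement score between $v$ and $S_j$. The output is the clustering of minimum disagreement across all $k^s$ trials. Per-guess work is polynomial in $n$ and the outer loop has $k^s$ iterations, which gives the claimed $n^{O(9^k/\eps^2)}\log n$ total running time.

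Correctness is reduced to analysing the single guess in which $(S_1,\dots,S_k) = (S\cap C_1^\p,\dots,S\cap C_k^\p)$. The heart of the argument is a concentration lemma: for each vertex $v \in V \setminus S$ and each cluster index $j$, the scaled quantity $(n/|S|)\cdot\bigl(|\text{pos.\ edges from } v \text{ to } S_j| - |\text{neg.\ edges from } v \text{ to } S_j|\bigr)$ is an unbiased estimator of its population analogue measured against $C_j^\p$. A Hoeffding bound plus a union bound over the $O(kn)$ pairs $(v,j)$ shows that a sample of size $\Omega(\log n/\eps^2)$ per cluster makes every such estimate uniformly accurate up to additive error proportional to $\eps n/k$. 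Consequently, for every $v$, the greedy choice either matches OPT's assignment or picks a nearly equally good alternative, and the accumulated extra disagreement is within $\eps \cdot \MDA(\mathrm{OPT}) + O(\eps n^2)$ of the optimum.

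The main obstacle is handling small clusters---those of size $o(n/k)$---where $S$ may miss the cluster entirely, invalidating the per-vertex concentration bound. The standard fix is to argue that folding every small cluster into the others adds at most $O(\eps\cdot n^2)$ disagreeing edges, which is absorbed into the $\eps$ slack when combined with a trivial lower bound on $\MDA(\mathrm{OPT})$ in the non-trivial regime; instances with even smaller OPT are handled by a separate argument. Propagating these two sources of error (concentration failure and small-cluster slack) carefully across all $k$ clusters and all $O(kn)$ concentration events is what drives the sample size from the naive $O(k\log n/\eps^2)$ up to the $O(9^k\log n/\eps^2)$ claimed---intuitively each of the $k$ layers of cluster-vs-cluster decisions multiplies the required precision by a small constant, giving the $9^k$ blow-up. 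This size-stratified analysis is the technical core; the algorithmic scheme and running-time bookkeeping are then routine.
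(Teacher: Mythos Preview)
Your proposal misses the two structural ingredients that make the Giotis--Guruswami algorithm a genuine PTAS for $\MDA[k]$, and as written the argument does not go through.

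First, the ``standard fix'' you sketch for small clusters---folding them into the remaining clusters and absorbing the resulting $O(\eps n^2)$ extra disagreements via a ``trivial lower bound on $\MDA(\mathrm{OPT})$''---does not exist. Unlike $\MA[k]$, where $\mathrm{OPT}\geq n^2/16$ always, the optimal $\MDA[k]$ cost can be any value down to $0$, so an additive $O(\eps n^2)$ error is not a $(1+\eps)$-factor guarantee. Your escape clause ``instances with even smaller OPT are handled by a separate argument'' is precisely the hard case, and you give no such argument. The actual algorithm deals with this in two coupled ways that your sketch omits: (i) it runs the $\MA[k]$ PTAS as a subroutine and returns the better of the two answers, so that whenever $\mathrm{OPT}=\Omega(\gamma n^2)$ for suitable $\gamma$, the complementary $\MA[k]$ solution already gives the $(1+\eps)$-factor; and (ii) when $\mathrm{OPT}$ is small, it does \emph{not} fold small clusters away but instead \emph{recurses} on the union of the small clusters with a reduced value of $k$. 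The greedy step is only trusted for vertices that land in clusters of size at least $n/(2k)$; everything else is re-clustered recursively.

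Second, your account of the $9^k$ blow-up (``each of the $k$ layers of cluster-vs-cluster decisions multiplies the required precision by a small constant'') is not how it arises. The $9^k$ comes from the recursive structure just described: the precision/error budget must be split across up to $k$ levels of recursion (each level peels off at least one large cluster), and the compounding of the per-level sample-size requirements with the error propagation through the recursion is what produces an exponent of the form $c^k/\eps^2$. A single-shot sample of size $\Theta(9^k\log n/\eps^2)$ with one round of greedy assignment, as you describe, does not by itself yield the guarantee; the exponent in the running time is the footprint of the recursion combined with exhaustive enumeration of $k$-partitions of each level's sample.
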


Since Giotis and Guruswami considered all possible ways of partitioning subset $S$ into $k$ clusters, their running time has exponential dependence on $k$. Here, we make the simple observation that within the SSAC framework we can overcome this exponential dependence on $k$ by making same-cluster queries to the oracle. The basic idea is to randomly sample a subset $S$ of vertices as before and partition it optimally into $k$ clusters by making same-cluster queries to the oracle. Note that by making at most $k|S|$ same-cluster queries, one can partition $S$ optimally into $k$ clusters. Once we have subset $S$ partitioned as in the optimal clustering (a key step needed in the analysis of Giotis and Guruswami) we follow their algorithm and analysis for $(1+\eps)$-approximation for $\MDA[k]$. Here is our main result for $\MDA[k]$ in the SSAC framework. We obtain similar results for $\MA[k]$.

\begin{theorem}[Main result: Upper bound]\label{thm:main-upper} Let $\veps > 0$ and $k \geq 2$. There is a {\em (randomized)} algorithm in the SSAC framework for $\MDA[k]$ that uses $O\left( \frac{k^{14}\log k\log n}{\eps^6} \right)$ same-cluster queries, runs in time $O(\frac{n k^{14}\log k\log n}{\eps^6})$ and outputs a $(1 + \veps)$-approximate solution with high probability.
\end{theorem}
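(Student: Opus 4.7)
The plan is to take the Giotis--Guruswami PTAS as a black box and replace its only source of exponential dependence on $k$ by same-cluster queries. Recall that their algorithm samples a subset $S\subseteq V$ of size $m$, enumerates all $k^{m}$ partitions $\{S_1,\ldots,S_k\}$ of $S$ into $k$ groups, and for each candidate partition greedily assigns every $v\in V\setminus S$ to the cluster $S_j$ with which $v$ has maximum edge-label agreement; their analysis shows that, \emph{conditional on the candidate partition coinciding with the restriction of a fixed optimal clustering $C^{*}$ to $S$}, the output is a $(1+\eps)$-approximation for $\MDA[k]$ with high probability over the random choice of $S$. Thus the full enumeration is needed only to ensure that the ``correct'' candidate is considered at least once.

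In the SSAC setting we skip this enumeration entirely. First, I would draw a uniform random sample $S\subseteq V$ of size $m=\Theta(k^{13}\log k\log n/\eps^{6})$, with the constant and polynomial dependencies dictated by the Giotis--Guruswami sample-complexity bound. Second, I would recover the restriction of $C^{*}$ to $S$ directly from the oracle: scan $S$ sequentially, maintain a list of at most $k$ discovered cluster representatives, and for each new $s\in S$ issue at most $k$ queries against the current representatives to decide whether $s$ joins an existing group or opens a new one (consistency of the oracle with $C^{*}$, which has at most $k$ clusters, guarantees the process terminates with the correct $k$-partition of $S$). This costs $km=O(k^{14}\log k\log n/\eps^{6})$ queries and the same amount of time. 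Third, I would execute the greedy assignment step of Giotis--Guruswami \emph{exactly once} on this single, provably correct partition, which takes $O(nkm)$ time.

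Correctness reduces immediately to the Giotis--Guruswami analysis: their $(1+\eps)$-approximation guarantee is established conditional on the candidate partition being the restriction of $C^{*}$ to $S$, and in our algorithm that condition holds by construction. Hence their probabilistic guarantee over the random sample transfers directly to a single run of our algorithm, and can be boosted to ``high probability'' by a constant number of independent repetitions (or by slightly increasing $m$ in the Chernoff-style bound). The query and time bounds are then precisely $km$ and $nkm$ respectively, matching the statement of Theorem~\ref{thm:main-upper}.

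The main obstacle is parameter reconciliation. Giotis--Guruswami state their bounds in a form that makes the enumeration factor $n^{O(9^{k}/\eps^{2})}$ dominant, but their underlying sampling lemma only requires $|S|$ to be polynomial in $k$ and $1/\eps$ for the greedy step to incur at most an $\eps\cdot\binom{n}{2}$ additive loss relative to $C^{*}$. One therefore needs to unpack that lemma and verify that its quantitative bound is compatible with $m=\Theta(k^{13}\log k\log n/\eps^{6})$, so that the overall query complexity lands at $O(k^{14}\log k\log n/\eps^{6})$. Beyond this bookkeeping step, no new combinatorial ideas are required; the rest of the proof is a direct invocation of their lemmas, and an analogous treatment (maximizing positive plus negative agreement in the greedy step) will yield the matching result for $\MA[k]$.
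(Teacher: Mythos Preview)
Your plan has a genuine structural gap: it treats the Giotis--Guruswami $\MDA[k]$ algorithm as if it were ``one random sample $S$, one correct $k$-partition of $S$, one greedy extension,'' and claims that conditional on hitting the restriction of $C^{*}$ to $S$ the greedy step already gives a $(1+\eps)$-multiplicative guarantee. That is not what their analysis establishes. A single sampled-and-extended clustering only controls the \emph{additive} disagreement error, by roughly $\eps n^{2}$; this converts to a multiplicative $(1+\eps)$ only when $OPT$ is $\Omega(n^{2})$. When $OPT$ is small, additive $\eps n^{2}$ is useless, and the Giotis--Guruswami algorithm handles that regime by two further mechanisms that your proposal omits: (i) it first runs the $\MA[k]$ algorithm with a much finer accuracy $\Theta(\alpha^{2}/k^{4})$ and keeps that clustering as a candidate (this is what covers the large-$OPT$ case), and (ii) after the greedy extension it separates ``large'' from ``small'' clusters, freezes the large ones, and \emph{recurses} on the union of the small clusters with the number of clusters reduced to $s<k$. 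The recursion is where the correctness for small clusters comes from, and it is also where the $k^{14}$ arises: step~(ii) contributes a recurrence $T(k)=T(k-1)+O(nk^{13}\log k\log n/\alpha^{6})$ (and $Q(k)=Q(k-1)+O(k^{13}\log k\log n/\alpha^{6})$), with the $k^{13}$ coming from the $\mathsf{QueryMaxAg}$ subroutine invoked at accuracy $\Theta(\alpha^{2}/k^{4})$.

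So the fix is not ``parameter reconciliation'' on a single sample size $m$; you need to reproduce the full recursive scheme (the paper's $\mathsf{QueryMinDisAgree}(k,\alpha)$), replacing the partition-enumeration of the sample in each recursive call by $k|S|$ oracle queries. In particular, $\MA[k]$ is not a corollary of your argument but a \emph{subroutine} inside $\MDA[k]$, and it itself proceeds in $O(1/\eps)$ rounds rather than a single shot. Once you mirror that structure, the approximation proof is indeed verbatim Giotis--Guruswami; your single-shot shortcut, however, would not yield a $(1+\eps)$-approximation for instances with small $OPT$.
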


We complement our upper bound result by providing a lower bound on the number of queries in the SSAC framework for any efficient $(1+\eps)$-approximation algorithm for $\MDA$ for any $\eps>0$. Our lower bound result is conditioned on the {\em Exponential Time Hypothesis} (ETH hypothesis) \cite{IP01,IPZ01}. Our lower bound result implies that the number of queries is depended on the number of optimal clusters $k$. Our main result with respect to query lower bound is given as follows.

\begin{theorem}[Main result: Lower bound]\label{thm:main-lower} Given that Exponential Time Hypothesis {\em (ETH)} holds, there exists a constant $\delta>0$ such that any $(1+\delta)$-approximation algorithm for $\MDA[k]$ in the SSAC framework that runs in polynomial time makes $\Omega(\frac{k}{\text{poly} \log{k}})$ same-cluster queries.
\end{theorem}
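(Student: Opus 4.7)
My plan is to combine a gap-preserving reduction from an ETH-hard problem with a simulation (enumerate-the-oracle's-answers) argument. The simulation is the easy half: if some polynomial-time SSAC algorithm $\OA$ produces a $(1+\delta)$-approximation using at most $q$ same-cluster queries, then a query-free algorithm can enumerate the $2^q$ possible response transcripts, run $\OA$ once for each transcript, and output the best of the $2^q$ candidate clusterings evaluated against the input. One of those transcripts coincides with the true oracle's responses for any fixed optimum, so the best candidate is at worst a $(1+\delta)$-approximation. This converts $\OA$ into a classical $2^{q}\cdot \mathrm{poly}(n)$-time $(1+\delta)$-approximation for $\MDA[k]$.

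The core of the proof is therefore a reduction showing that, under ETH, no classical $2^{o(k/\mathrm{poly}\log k)}\cdot \mathrm{poly}(n)$-time algorithm can $(1+\delta)$-approximate $\MDA[k]$ for some constant $\delta>0$. I would start from $\NAET$ (or 3-SAT), which by the Sparsification Lemma and ETH requires time $2^{\Omega(N)}$ on instances with $N$ variables and $O(N)$ clauses, and construct a gap instance of $\MDA[k]$. The natural template is the one that forces clause-satisfaction into cluster placement: introduce a small vertex gadget per variable/clause wired with positive and negative edges so that any clustering that disagrees with few edges must correspond to an (almost) NAE-satisfying assignment, while a NAE-satisfying assignment gives a clustering with disagreement at most some threshold $T$; unsatisfiability forces disagreement at least $(1+\delta)T$ for a fixed constant $\delta>0$. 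The key design constraint is that the number of optimal clusters $k$ in the produced instance must be $\tilde{O}(N)$, i.e.\ $k = N \cdot \mathrm{poly}\log N$, so that $2^{o(k/\mathrm{poly}\log k)} = 2^{o(N)}$ contradicts ETH. A standard way to achieve this scaling is to take many independent copies of a small base gadget, each contributing one (or a few) optimal clusters, and to add heavy-weight edges between copies that force the optimum to cluster each copy separately.

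Combining the two pieces, if $\OA$ used $q = o(k/\mathrm{poly}\log k)$ queries, the simulation would give a $2^{o(N)}\cdot\mathrm{poly}(N)$-time $(1+\delta)$-approximation for $\MDA[k]$ on the reduced instances, which, via the gap, decides $\NAET$ in $2^{o(N)}$ time and violates ETH. Hence $q = \Omega(k/\mathrm{poly}\log k)$. I expect the main obstacle to be engineering the reduction so that two things hold simultaneously: (i) a constant-factor gap in the $\MDA[k]$ objective, which is delicate because the absolute optimum is typically large and small relative gaps must be amplified; and (ii) tight control of $k$, ensuring that the reduction does not blow up $k$ by polynomial factors beyond $\mathrm{poly}\log N$, since any polynomial blow-up would erase the $\Omega(k/\mathrm{poly}\log k)$ lower bound. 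Balancing these usually requires a replication/amplification trick (e.g., parallel repetition or block construction) on a base hard gap instance of correlation clustering derived from $\NAET$.
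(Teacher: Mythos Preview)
Your high-level plan matches the paper's exactly: the paper also (i) reduces the query lower bound to a time lower bound for classical $(1+\delta)$-approximation of $\MDA[k]$ via the same enumerate-all-transcripts simulation, and (ii) proves that time lower bound by a chain of gap-preserving reductions starting from ETH and passing through $\NAET$ to correlation clustering, keeping the instance size (hence $k$) within a $\text{poly}\log$ factor of the original number of clauses.

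The one concrete ingredient you are missing, and which resolves both obstacles you flag, is Dinur's almost-linear PCP theorem: from a $3$-$\mathsf{SAT}$ instance with $m$ clauses it produces a \emph{gap} instance of bounded-degree $\mathsf{E}3$-$\mathsf{SAT}$ with $O(m\,\text{poly}\log m)$ clauses. This single step is the source both of the constant gap and of the $\text{poly}\log$ loss in $k$; no further amplification is needed. The Sparsification Lemma alone does not give a gap, and parallel repetition or block replication cannot manufacture a constant gap from an exact decision problem---you need a PCP, and specifically one with quasi-linear size. After Dinur, the paper chains linear-size, bounded-degree, gap-preserving reductions $\mathsf{E}3$-$\mathsf{SAT}\to\mathsf{NAE}6$-$\mathsf{SAT}\to\NAET\to$ monotone $\NAET\to$ $2$-coloring $3$-uniform bounded-degree hypergraphs, and finally invokes the Charikar--Guruswami--Wirth reduction to correlation clustering; $k=\Theta(m\,\text{poly}\log m)$ is maintained throughout. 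So your anticipated difficulty of ``balancing gap versus blowup in $k$'' is handled not by replication tricks but by citing the right off-the-shelf PCP.
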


Exponential Time Hypothesis is the following statement regarding the hardness of the $3$-$\mathsf{SAT}$ problem.

\begin{quote}
{\em Exponential Time Hypothesis (ETH)\cite{IP01,IPZ01}}: There does not exist an algorithm that can decide whether any $3$-$\mathsf{SAT}$ formula with $m$ clauses is satisfiable with running time $2^{o(m)}$.
\end{quote}

Note that our query lower bound result is a simple corollary of the following theorem that we prove.

\begin{theorem} If the Exponential Time Hypothesis (ETH) holds, then there exists a constant $\delta > 0$ such that any $(1+\delta)$-approximation algorithm for $\MDA[k]$ requires $2^{\Omega(\frac{k}{poly \log{k}})}$ time.
\end{theorem}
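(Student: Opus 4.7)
The plan is to establish the lower bound via a polynomial-time gap-preserving reduction from a gap version of \NAET\ to $\MDA[k]$, using the PCP theorem in its near-linear form to obtain a strong enough starting point under ETH. First I would invoke the following consequence of ETH together with a near-linear PCP: there exists a constant $c > 0$ such that distinguishing satisfiable \NAET\ instances on $n$ variables from those admitting an NAE-assignment for at most a $(1-c)$-fraction of their clauses requires $2^{\Omega(n)}$ time, even when the produced gap instance has $n' = O(n \cdot \text{poly}\log n)$ variables and $m' = O(n \cdot \text{poly}\log n)$ clauses. The $\text{poly}\log$ blowup from the PCP is what ultimately produces the $\text{poly}\log k$ factor in the denominator of the final bound.

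Next I would build a polynomial-time reduction from this gap problem to $\MDA[k]$. For each variable $x_i$, introduce a variable gadget consisting of two vertex groups $T_i, F_i$ joined internally by heavy positive edges and across by heavy negative edges, so that any near-optimal clustering assigns each $T_i$ and each $F_i$ to its own cluster; this yields $k = 2n'$. For each NAE3 clause $(\ell_1,\ell_2,\ell_3)$, attach a constant-size clause gadget wired to the appropriate $T_j/F_j$ groups so that it contributes no extra disagreements when at least one literal evaluates true and at least one evaluates false in the induced assignment, and $\Omega(1)$ extra disagreements otherwise. On a YES instance, a satisfying NAE-assignment induces a clustering of cost equal to a baseline $B = \Theta(m')$ coming from unavoidable gadget disagreements; on a NO instance, at least $cm'$ clauses are violated by every assignment, adding $\Omega(cm')$ extra disagreements on top of $B$, so the optimum cost is at least $(1+\delta)B$ for some constant $\delta > 0$. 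A $(1+\delta)$-approximation algorithm for $\MDA[k]$ running in $2^{o(k/\text{poly}\log k)}$ time would then decide the gap problem in $2^{o(n'/\text{poly}\log n')} = 2^{o(n)}$ time, contradicting ETH.

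The hard part will be the clause-gadget construction and the calibration of edge weights needed to obtain a \emph{multiplicative} constant-factor gap rather than merely an additive one. The gadget must be rigid enough that any near-optimal $\MDA[k]$ solution still respects the intended ``one cluster per $T_i$, one cluster per $F_i$'' structure (otherwise a cluster merge could silently absorb an unsatisfied clause), yet light enough that the baseline $B$ remains $\Theta(m')$, so that the $\Omega(cm')$ additive signal translates into a $(1+\delta)$ multiplicative gap. The remaining details --- polynomial runtime of the reduction and arithmetic on the parameters $n, n', k$ --- are routine.
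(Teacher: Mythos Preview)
Your high-level plan is sound and matches the paper's: start from ETH, apply a near-linear PCP (Dinur) to get a gap problem with only polylogarithmic blowup, and then reduce to $\MDA[k]$ so that $k$ is linear in the instance size. Where you and the paper diverge is in the last step, and that is precisely the step you leave undone.

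The paper does not design a new gadget reduction from $\mathsf{NAE}3$-$\mathsf{SAT}$ to correlation clustering. Instead it passes through a longer chain --- gap $\mathsf{E}3$-$\mathsf{SAT}$ $\to$ gap $\mathsf{NAE}6$-$\mathsf{SAT}$ $\to$ gap $\mathsf{NAE}3$-$\mathsf{SAT}$ $\to$ gap \emph{monotone} $\mathsf{NAE}3$-$\mathsf{SAT}$ $\to$ gap $2$-colorability of $3$-uniform hypergraphs --- and then invokes the existing $\mathsf{APX}$-hardness reduction of Charikar, Guruswami, and Wirth from hypergraph $2$-coloring to correlation clustering on complete graphs. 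The one thing the paper tracks carefully through every link of this chain is a \emph{bounded degree} condition (each variable appears in at most $d$ clauses, each hypergraph vertex has degree at most $d$); this is what guarantees the final instance has $N = O(m)$ vertices, hence $k = O(m)$, so that a $2^{\Omega(m/\text{poly}\log m)}$ lower bound becomes a $2^{\Omega(k/\text{poly}\log k)}$ one. Your sketch never mentions degree bounds.

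Your direct reduction also has a model mismatch: $\MDA[k]$ here is on \emph{complete unweighted} graphs with $\{+,-\}$ labels, so there are no ``edge weights'' to calibrate. You can simulate weight by vertex replication, but then either your groups $T_i, F_i$ are super-constant in size (breaking the $k = \Theta(n')$ accounting) or they are constant-size and the rigidity claim --- that any near-optimal clustering puts each $T_i$, $F_i$ in its own cluster among $2n'$ constant-size clusters --- is genuinely delicate in the unweighted setting. This is exactly the work the CGW reduction already does for you. Since you yourself flag the gadget and its calibration as ``the hard part'' and leave it open, the proposal as written is an outline rather than a proof; the clean fix is to route through monotone $\mathsf{NAE}3$-$\mathsf{SAT}$ and bounded-degree hypergraph $2$-coloring and cite CGW off the shelf, as the paper does.
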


The above lower bound statement may be of independent interest. 
It was already known that $\MDA$ is $\mathsf{APX}$-hard.
Our result is a non-trivial addition to the understanding of the hardness of the correlation clustering problem.
Given that our query upper bound result is through making simple observations in the algorithms of Giotis and Guruswami, our lower bound results may be regarded as the primary contribution of this work.
So, we first give our lower bound results in the next section and the upper bound results in Section~\ref{sec-upper}.
However, before we start discussing our results, here is a brief discussion on the related works.

\begin{paragraph}{Related Works} There have been numerous works on clustering problems in semi-supervised settings. Balcan and Blum \cite{BB2008} proposed an interactive framework for clustering which use `split/merge' queries. In this framework, given any abritrary clustering $C=\{C_1,C_2,\ldots,\}$ as query, oracle specifies some cluster $C_l$ should be split or clusters $C_i$ and $C_j$ should be merged. Awasthi et al.~\cite{ABV2014} developed a local clustering algorithm which uses these split/merge queries. {\em One versus all} queries for clustering were studied by Voevodski et al.~\cite{VBRTX2014}. The oracle, on a query $s\in X$, returns distances from $s$ to all points in $X$. The authors provided a clustering, close to optimal $k$-median clustering, with only $O(k)$ such queries on instances satisfying $(c,\eps)$-approximation stability property \cite{BBG2013}. Fomin \etal~\cite{FKPPV2014} gave a conditional lower bound for the {\em cluster~editing} problem which can also be stated as a decision version of the correlation clustering problem. In the $p$-cluster editing problem, given a graph $G$ and a budget $B$, and an integer $p$, the objective is to decide whether $G$ can be transformed into a union of $p$ clusters (disjoint cliques) using at most $B$ edge additions and deletions. Assuming ETH, they showed that there exists $p=\Theta(k^{\omega})$ for some $0\leq \omega \leq 1$ such that there is no algorithm that decides in time $2^{o(\sqrt{pB})} \cdot n^{O(1)}$ whether $G$ can be transformed into a union of $p$ cliques using at most $B$ adjustments (edge additions and deletions). It is not clear whether their exact reduction can be modified into an approximation preserving reduction to obtain results similar to what we have here. Mazumdar and Saha \cite{MS2017} studied correlation clustering problem in a similar setting where edge similarity and dissimilarity information are assumed to be coming from two distributions. Given such an input, they studied the {\em cluster recovery} problem in SSAC framework, and gave upper and lower bounds on the query complexity. Their lower bound results are information theoretic in nature. We are, however, interested in the approximate solutions for the correlation clustering problem.\end{paragraph}

\end{section}

\section{Query Lower Bounds } \label{sec-lower} 
In this section, we obtain a lower bound on the number of same-cluster queries that any FPTAS within the SSAC framework needs to make for the problem $\MDA[k]$. 
We derive a conditional lower bound for the minimum number of queries under the Exponential Time Hypothesis (ETH) assumption. 
Some such conditional lower bound results based on ETH can be found in \cite{M16}. 
We prove the following main theorem in this section.

\begin{theorem}\label{thm:l1}
If the Exponential Time Hypothesis (ETH) holds, then there exists a constant $\delta>0$ such that any $(1+\delta)$-approximation algorithm for $\MDA[k]$ requires $2^{\Omega(\frac{k}{\text{poly}\log k})}$ time.
\end{theorem}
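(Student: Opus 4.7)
The plan is to derive the ETH-based lower bound through a gap-preserving polynomial-time reduction from a sparse version of an ETH-hard constraint satisfaction problem---most naturally \NAET\ or gap-$3$-$\mathsf{SAT}$---to $\MDA[k]$. The starting point combines ETH with the Impagliazzo--Paturi--Zane sparsification lemma: there exist constants $\eta, c > 0$ such that distinguishing satisfiable \NAET\ instances from ones in which at most a $(1-\eta)$-fraction of the clauses can be NAE-satisfied requires $2^{\Omega(m)}$ time, even when $m = \Theta(n)$.

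From such an instance $\phi$ on $n$ variables and $m = O(n)$ clauses, I would construct a correlation-clustering instance $G$ on $N = \mathrm{poly}(n)$ vertices whose optimum uses $k = O(n \cdot \mathrm{poly}\log n)$ clusters, and whose $\MDA$ value differs between the yes and no cases by a multiplicative factor of at least $(1+\delta)$ for some absolute $\delta > 0$. The construction will use clusters to encode a truth assignment---e.g., a pair of ``true'' and ``false'' clusters per variable, or per bundle of $\mathrm{poly}\log n$ variables if one wants to sharpen $k$---together with a clause gadget that, for each NAE-unsatisfied clause, forces $\Omega(1)$ disagreements that no clustering can avoid. Amplification and padding with disjoint dummy cliques can be used to align the cluster count to exactly $k$ in both yes- and no-instances.

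Given the reduction, any $(1+\delta)$-approximation algorithm for $\MDA[k]$ running in time $T(N, k)$ must distinguish the two sides of the gap, so $T(N, k) \ge 2^{\Omega(m)} = 2^{\Omega(n)}$. With $k = O(n \cdot \mathrm{poly}\log n)$, this yields $T(N, k) \ge 2^{\Omega(k/\mathrm{poly}\log k)}$, which is exactly the bound claimed. The query lower bound of Theorem~\ref{thm:main-lower} then follows as a black-box corollary: any polynomial-time SSAC algorithm that issues at most $q$ same-cluster queries can be simulated in time $\mathrm{poly}(N) \cdot 2^q$ by exhaustively guessing the oracle's responses along every branch, so $q = o(k/\mathrm{poly}\log k)$ would produce a $2^{o(k/\mathrm{poly}\log k)}$-time algorithm and contradict Theorem~\ref{thm:l1}.

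The main obstacle is engineering the reduction so that three requirements hold simultaneously: (i) a constant integrality gap $\delta > 0$, (ii) $k = \tilde{O}(n)$ rather than $\mathrm{poly}(n)$, and (iii) polynomial instance size $N$. Standard APX-hardness reductions for $\MDA$ (going through multi-cut, multi-way cut, or the original Bansal--Blum--Chawla construction) typically either blow up $k$ well beyond $\tilde{O}(n)$ or produce gaps that shrink with instance size; the technical heart of the argument is therefore the gadget design, which must keep the cluster count tight while ensuring that the per-unsatisfied-clause disagreement cost remains a constant fraction of the overall objective. Balancing these two competing constraints---controlled $k$ against a constant approximation gap---is where essentially all the quantitative work of the proof will live.
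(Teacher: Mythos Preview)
Your high-level strategy---chain a gap-preserving reduction from a sparse $\mathsf{NAE}$-type CSP down to $\MDA[k]$ with $k$ linear in the instance size, then read off the ETH lower bound---is exactly the paper's route. Two points, however, need correction, and once they are fixed the ``main obstacle'' you identify evaporates.

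\medskip
\noindent\textbf{Where the $\mathrm{poly}\log$ actually comes from.} The Impagliazzo--Paturi--Zane sparsification lemma does \emph{not} yield a gap: it says only that exact $3$-$\mathsf{SAT}$ on sparse instances inherits the $2^{\Omega(m)}$ hardness. To obtain the promise ``satisfiable vs.\ at most $(1-\eta)$-satisfiable'' you must apply a PCP, and it is Dinur's PCP that introduces the $m\mapsto m\cdot\mathrm{poly}\log m$ blowup. After the PCP you have bounded-degree gap-$\mathsf{E}3$-$\mathsf{SAT}$ hard to $2^{\Omega(m/\mathrm{poly}\log m)}$; every subsequent reduction in the paper is linear-size, so $k=\Theta(m)$ and the final bound is $2^{\Omega(k/\mathrm{poly}\log k)}$. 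Your attribution of the $\mathrm{poly}\log$ to ``$k=O(n\cdot\mathrm{poly}\log n)$ clusters in the reduction'' is therefore backwards---there is no $\mathrm{poly}\log$ loss on the CSP-to-$\MDA$ side at all.

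\medskip
\noindent\textbf{The reduction you are worried about already exists.} You flag the gadget design---constant gap, $k=\tilde O(n)$, polynomial size simultaneously---as where ``essentially all the quantitative work lives.'' In fact this step is off-the-shelf: the APX-hardness reduction of Charikar, Guruswami and Wirth from $2$-colorability of bounded-degree $3$-uniform hypergraphs to correlation clustering on complete graphs produces $N=O(m)$ vertices, optimum cost $M-N$ in the yes case versus $\ge M-(1-\delta)N$ in the no case for an absolute $\delta>0$, and the optimal clustering uses $O(m)$ clusters. The paper simply precedes this with the standard linear-size chain $\mathsf{E}3$-$\mathsf{SAT}\to\mathsf{NAE}6$-$\mathsf{SAT}\to\mathsf{NAE}3$-$\mathsf{SAT}\to$ monotone $\mathsf{NAE}3$-$\mathsf{SAT}\to$ hypergraph $2$-coloring, all gap-preserving and degree-bounded. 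No new gadgetry is required; your sketched ``pair of clusters per variable plus clause gadget'' construction is unnecessary, and attempting it from scratch is where your proposal would most likely founder (getting a \emph{multiplicative} gap on complete-graph $\MDA$ is genuinely delicate, which is why the paper imports CGW rather than redoing it).

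\medskip
Your final paragraph---simulating a $q$-query SSAC algorithm in $\mathrm{poly}(N)\cdot 2^{q}$ time by branching on oracle answers, hence $q=\Omega(k/\mathrm{poly}\log k)$---is correct and is precisely how the paper derives Theorem~\ref{thm:main-lower} from Theorem~\ref{thm:l1}.
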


The above theorem gives a proof of Theorem~\ref{thm:main-lower}.

\begin{proof}[Proof of Theorem~\ref{thm:main-lower}]
Let us assume that there exists a query-FPTAS that makes only $o(\frac{k}{\text{poly}\log k})$ same-cluster queries. Then, by considering all possible answers for these queries and picking the best solution, one can solve the problem in $2^{o(\frac{k}{\text{poly}\log k})}$ time which contradicts Theorem~\ref{thm:l1}. \end{proof}

In the remaining section, we give the proof of Theorem~\ref{thm:l1}.
First, we state the ETH hypothesis. 
Our lower bound results are derived assuming this hypothesis.

\begin{quote} 
\underline{\bf Hypothesis 1} {\it (Exponential Time Hypothesis (ETH)\cite{IP01,IPZ01})}: There does not exist an algorithm that decides whether any $3$-$\mathsf{SAT}$ formula with $m$ clauses is satisfiable with running time $2^{o(m)}$.
\end{quote}

Since we would like to obtain lower bounds in the approximation domain, we will need a {\em gap} version of the above ETH hypothesis. 
The following version of the PCP theorem would be very useful in obtaining a gap version of ETH.

\begin{theorem}[Dinur's PCP Theorem~\cite{D07}] \label{lemma-pcp} 
For some constants $\veps,d>0$, there exists a polynomial-time reduction that takes a $3$-$\mathsf{SAT}$ formula $\psi$ with $m$ clauses as input and produces one $\mathsf{E}3$-$\mathsf{SAT}$\footnote{Every clause in an $\mathsf{E}3$-$\mathsf{SAT}$ formula has exactly $3$ literals.} 
formula $\phi$ with $\mpr=O(m\text{poly}\log m)$ clauses such that
\begin{itemize}
	\item if $\psi$ is satisfiable, then $\phi$ is satisfiable, and
	\item if $\psi$ is unsatisfiable, then $\text{val}(\phi)\leq 1-\eps$, and
	\item each variable in $\phi$ appears in at most $d$ clauses.
\end{itemize}
where $\text{val}(\phi)$ is the maximum fraction of clauses of $\phi$ which are satisfiable by any assignment.
\end{theorem}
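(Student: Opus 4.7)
The plan is to derive Theorem~\ref{lemma-pcp} by invoking Dinur's combinatorial proof of the PCP theorem in the constraint-graph formalism and then translating the resulting constraint graph back to an $\mathsf{E}3$-$\mathsf{SAT}$ formula in a gap-preserving, bounded-occurrence manner.

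First I would reduce $\psi$ to an initial binary constraint graph $G_0$ over a constant alphabet (vertices $=$ clauses of $\psi$, edges $=$ shared variables, constraint $=$ consistency of the shared assignment), so that $|G_0|=O(m)$ and $G_0$ is fully satisfiable iff $\psi$ is. I would then iterate Dinur's gap-amplification operator $R=O(\log m)$ times. Each round consists of: (i) preprocessing to make the current graph $d$-regular and a constant-spectral-gap expander by attaching a fixed expander gadget at each vertex; (ii) replacing $G$ by its $t$-th walk power $G^t$ for a sufficiently large constant $t$ --- constraints now range over length-$t$ walks --- which multiplies the unsatisfiability gap by $\Omega(\sqrt{t})$ at the cost of blowing the alphabet up to $\Sigma^{d^{O(t)}}$; and (iii) composing each constraint with a constant-size assignment tester (for instance, a Long-Code based PCP of Proximity) to bring the alphabet back to a universal constant while losing only a constant factor in the gap. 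Choosing $t$ so that $\sqrt{t}$ beats the composition loss, each round yields a constant-factor net gain, and after $R$ rounds the gap reaches a fixed constant $\veps>0$. Careful per-round size bookkeeping keeps the final constraint graph at $O(m\,\mathrm{poly}\log m)$ edges, $d$-regular, and on a constant alphabet.

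I would finish by rewriting each constant-arity Boolean constraint of the final graph as a conjunction of a constant number of exactly-three-literal clauses via a standard auxiliary-variable gadget, producing $\phi$. This preserves the gap up to a constant factor and, combined with $d$-regularity, forces every variable of $\phi$ to appear in at most $d'=O(d)$ clauses. The main obstacle is sub-step (ii): proving that $t$-powering a $(1-\lambda)$-spectral expander really multiplies the unsatisfiability gap by $\Omega(\sqrt{t})$ is Dinur's amplification lemma, whose proof rests on a plurality-assignment argument combined with random-walk concentration on the expander. Coupling this with (iii) additionally requires the $t$-powered constraints to be robust enough that composition with an inner assignment tester preserves the amplified gap up to a constant factor. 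A secondary but genuinely delicate point is the per-round size accounting needed to cap the final instance at $O(m\,\mathrm{poly}\log m)$ rather than polynomial, which is where Dinur's parameter tuning --- in particular using a preprocessing expander and an inner PCP whose sizes are linear in their input sizes --- is essential.
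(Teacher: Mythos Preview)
Your sketch is a faithful outline of Dinur's own argument---constraint-graph formulation, expanderization, $t$-powering for gap amplification, and composition with a constant-size assignment tester, iterated $O(\log m)$ times---and the post-processing into bounded-occurrence $\mathsf{E}3$-$\mathsf{SAT}$ is standard. There is no gap in the approach.

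That said, the comparison you were asked to make is vacuous here: the paper does not prove this theorem. It is quoted verbatim as a black box from~\cite{D07} (note the attribution ``Dinur's PCP Theorem'' and the absence of any proof environment following the statement), and is used only to pass from Hypothesis~1 to Hypothesis~2 via Lemma~\ref{lemma:H1toH2}. So there is no ``paper's own proof'' to compare against; your proposal is simply a sketch of the cited result, which the authors take for granted.
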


The hypothesis below follows from ETH and the above Theorem \ref{lemma-pcp}, and will be useful for our analysis.

\begin{quote} 
\underline{\bf Hypothesis 2}: There exists constants $\eps,d>0$ such that the following holds: There does not exist an algorithm that, given a $\mathsf{E}3$-$\mathsf{SAT}$ formula $\psi$ with $m$ clauses and each variable appearing in at most $d$ clauses, distinguishes whether $\psi$ is satisfiable or $val(\psi)\leq (1-\eps)$, and runs in time better than $2^{\Omega \left(\frac{m}{\text{poly} \log m} \right)}$.
\end{quote}

The lemma given below trivially follows from Dinur's PCP Theorem \ref{lemma-pcp}.

\begin{lemma}\label{lemma:H1toH2}
If Hypothesis 1 holds, then so does Hypothesis 2.
\end{lemma}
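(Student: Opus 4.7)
The plan is to argue by contrapositive: assuming Hypothesis~2 fails, I will build a fast algorithm for $3$-$\mathsf{SAT}$ that violates Hypothesis~1, with Dinur's PCP reduction (Theorem~\ref{lemma-pcp}) doing all the work of passing between the two problems and introducing the gap.

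First I would fix the constants. Let $\veps,d>0$ be the constants guaranteed by Theorem~\ref{lemma-pcp}; these are exactly the constants that will appear in Hypothesis~2. Now suppose, toward a contradiction, that Hypothesis~2 fails for these constants, i.e.\ there is an algorithm $\OA$ that, given any $\mathsf{E}3$-$\mathsf{SAT}$ instance $\phi$ on $\mpr$ clauses in which each variable occurs in at most $d$ clauses, distinguishes the cases ``$\phi$ satisfiable'' and ``$\text{val}(\phi)\le 1-\veps$'' in time $2^{o(\mpr/\text{poly}\log \mpr)}$.

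Next I would describe the composed algorithm for $3$-$\mathsf{SAT}$: given a $3$-$\mathsf{SAT}$ formula $\psi$ with $m$ clauses, first apply Dinur's reduction in polynomial time to obtain an $\mathsf{E}3$-$\mathsf{SAT}$ formula $\phi$ with $\mpr=O(m\,\text{poly}\log m)$ clauses and max variable occurrence $d$. Then run $\OA$ on $\phi$ and return its answer. Correctness is immediate from the two completeness/soundness bullets of Theorem~\ref{lemma-pcp}: if $\psi$ is satisfiable, $\phi$ is satisfiable and $\OA$ accepts; if $\psi$ is unsatisfiable, $\text{val}(\phi)\le 1-\veps$ and $\OA$ rejects.

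The only step requiring a moment of thought is checking that the running time really collapses back to $2^{o(m)}$ and so contradicts Hypothesis~1. Substituting $\mpr=O(m\,\text{poly}\log m)$,
\[
2^{o(\mpr/\text{poly}\log \mpr)} \;=\; 2^{o(m\,\text{poly}\log m\,/\,\text{poly}\log(m\,\text{poly}\log m))} \;=\; 2^{o(m)},
\]
where in the last equality I use that $\text{poly}\log(m\,\text{poly}\log m)=\Theta(\text{poly}\log m)$, so the logarithmic factors in the numerator and denominator cancel up to a constant and the little-$o$ is preserved. Adding the polynomial cost of Dinur's reduction does not change the bound. This yields a $2^{o(m)}$-time algorithm for $3$-$\mathsf{SAT}$, contradicting Hypothesis~1 and completing the proof. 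The main (very mild) obstacle is just this bookkeeping on the polylog factors; everything else is a direct invocation of Theorem~\ref{lemma-pcp}.
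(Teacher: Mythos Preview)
Your proposal is correct and is exactly the argument the paper has in mind: the paper itself gives no proof beyond stating that the lemma ``trivially follows from Dinur's PCP Theorem~\ref{lemma-pcp}'', and your contrapositive---compose a hypothetical fast gap-$\mathsf{E}3$-$\mathsf{SAT}$ algorithm with Dinur's reduction and check that the $\text{poly}\log$ blowup is absorbed---is precisely that trivial derivation spelled out.
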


We now give a reduction from the gap version of the $\mathsf{E}3$-$\mathsf{SAT}$ problem to the gap version of the $\mathsf{NAE}3$-$\mathsf{SAT}$ problem.
A problem instance of $\mathsf{NAE}3$-$\mathsf{SAT}$ consists of a set of clauses (each containing exactly $3$ literals) and a clause is said to be satisfied by an assignment iff at least one and at most two literals in the clause is true ($\mathsf{NAE}$ stands for ``Not All Equal").
For any instance $\phi$, we define $val'(\phi)$ to be the maximum fraction of clauses that can be satisfied in the ``not all equal" sense by an assignment. 
Note that this is different from $val(\phi)$ which is equal to the maximum fraction of clauses that can be satisfied (in the usual sense).
First, we reduce $\mathsf{E}3$-$\mathsf{SAT}$ to $\mathsf{NAE}6$-$\mathsf{SAT}$ and then $\mathsf{NAE}6$-$\mathsf{SAT}$ to $\mathsf{NAE}3$-$\mathsf{SAT}$. 

\begin{lemma} \label{esat-nae6sat} 
Let $0 < \veps < 1$ and $d > 1$.
There is a polynomial time reduction that given an instance $\psi$ of $\mathsf{E}3$-$\mathsf{SAT}$ with $m$ clauses with each variable appearing in at most $d$ clauses, produces an instance $\phi$ of $\mathsf{NAE}6$-$\mathsf{SAT}$ with $4m$ clauses such that
\begin{enumerate}
	\item If $val(\psi)=1$, then $val'(\phi) = 1$, and
	\item If $val(\psi) \leq (1-\veps)$, then $val'(\phi) \leq (1 - \veps/4)$, and
	\item Each variable in $\phi$ appears in at most $4d$ clauses.
\end{enumerate}
\end{lemma}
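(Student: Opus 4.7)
I would build $\phi$ clause-by-clause via a local gadget: for each clause $C_i = (\ell_1^i \vee \ell_2^i \vee \ell_3^i)$ of $\psi$, introduce a small constant number of fresh, per-clause auxiliary variables and emit exactly four $\mathsf{NAE}6$-$\mathsf{SAT}$ clauses on the three literals of $C_i$ together with these new variables. The auxiliaries are \emph{disjoint} across distinct gadgets, which will be crucial for the degree bound. This immediately yields a formula with $4m$ clauses.

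The gadget is designed to satisfy two properties simultaneously: \emph{(completeness)} any assignment that satisfies $C_i$ in the usual sense extends to a setting of its auxiliary variables under which all four gadget clauses are NAE-satisfied; and \emph{(soundness)} any NAE-satisfying assignment of all four gadget clauses, once a distinguished ``anchor'' auxiliary variable is normalized to a fixed value (which is legitimate because NAE-SAT is invariant under global complementation), must satisfy $C_i$ on the original variables. A natural realization uses one per-clause ``false-pole'' variable $s_i$ together with an $\mathsf{NAE}4$ clause $(\ell_1^i,\ell_2^i,\ell_3^i,s_i)$---which, once $s_i$ is pinned to $0$, is equivalent to $\ell_1^i \vee \ell_2^i \vee \ell_3^i$---lifted to arity six by a constant-size padding, together with three further clauses whose sole role is to force the intended pinning of $s_i$ locally.

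Granting such a gadget, the three items follow directly. Item~1 is exactly completeness. For item~2, if $val(\psi) \le 1-\veps$ then every assignment leaves at least $\veps m$ of the $C_i$'s unsatisfied; by soundness each such $C_i$ contributes at least one NAE-violated clause among the four clauses of its gadget, so any assignment to $\phi$ violates at least $\veps m$ of its $4m$ clauses, yielding $val'(\phi) \le 1 - \veps/4$. For item~3, each original variable occurs in at most $d$ clauses of $\psi$ and each such clause contributes at most $4$ clauses to $\phi$, so it occurs in at most $4d$ clauses of $\phi$; the per-clause auxiliaries are confined to the four clauses of a single gadget and thus occur at most $4$ times.

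The principal obstacle is the design of the gadget itself. NAE-SAT has an inherent global-complementation symmetry, so one cannot hope to encode the asymmetric OR without an anchor, yet using a \emph{single} global anchor would blow the $4d$ degree bound. The delicate step is to use \emph{per-clause} anchors whose intended value ($0$) is forced by local consistency clauses, while still producing exactly four NAE-6 clauses per gadget so that the loss in the gap is exactly the factor of $4$ promised. Once such a gadget is pinned down, the remaining work---writing out the completeness extension, the case-analysis for soundness on the handful of auxiliary configurations, and the degree count---is routine.
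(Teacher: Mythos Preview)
Your proposal has a genuine gap at precisely the point you flag as delicate: the gadget you describe cannot exist. Suppose the four gadget clauses involve the three original literals $\ell_1,\ell_2,\ell_3$ together with fresh per-clause auxiliaries (including the anchor $s_i$), and satisfy completeness and soundness as you state them. Take $\ell_1=\ell_2=\ell_3=1$; by completeness there is an auxiliary setting $\alpha$ under which all four NAE clauses hold. Complementing every variable occurring in the gadget preserves NAE-satisfaction, yet now $\ell_1=\ell_2=\ell_3=0$ and $C_i$ is unsatisfied---so soundness fails. Your suggested fix, three extra clauses that ``force $s_i=0$ locally'', is impossible for the same reason: complementing all the local auxiliaries keeps those three clauses NAE-satisfied while flipping $s_i$ to $1$. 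The global-complementation normalization you invoke would indeed work with a \emph{single} global anchor, but with per-clause anchors you cannot normalize all the $s_i$ simultaneously by one complementation; and since the original variables are shared across gadgets, normalizing one gadget flips the very values you would read off in the others, so no consistent assignment to $\psi$ can be extracted.

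The paper sidesteps this obstruction by not placing the original literals into $\phi$ at all. Each variable $x_i$ of $\psi$ is encoded by a \emph{pair} $(y_i,z_i)$ with the semantics $x_i=1 \Leftrightarrow y_i\neq z_i$; this encoding is invariant under simultaneous complementation of $y_i$ and $z_i$, so the NAE symmetry is absorbed into the variable representation rather than fought with an anchor. For a clause on literals $l_i,l_j,l_k$ one emits the four NAE-$6$ clauses obtained by toggling the signs of the $j$- and $k$-pairs in $(p_i,q_i,p_j,q_j,p_k,q_k)$, where $(p,q)=(y,z)$ for a positive literal and $(p,q)=(y,\bar z)$ for a negative one. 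A short case analysis shows that any assignment NAE-satisfies either exactly $3$ or exactly $4$ of these four clauses, with $4$ occurring iff some pair is unequal, i.e., the corresponding literal is true under the decoding $x_i:=[y_i\neq z_i]$. The $\veps/4$ loss and the $4d$ degree bound then follow from a direct count. If you want to rescue your outline, the repair is exactly this: replace the direct use of literals by a complementation-invariant pair encoding, after which no anchor is needed.
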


\begin{proof}
We construct $\phi$ in the following manner: for every variable $x_i$ in $\psi$, we introduce two variables $y_i$ and $z_i$. We will use $x_i = 1$ iff $y_i \neq z_i$ for every $i$ in our reduction.
For every clause $(l_i, l_j, l_k)$ (with $l_i, l_j, l_k$ being literals), we introduce the following four NAE clauses in $\phi$:
$$(p_i, q_i, p_j, q_j, p_k, q_k), (p_i, q_i, p_j, q_j, \bar{p}_k, \bar{q}_k), (p_i, q_i, \bar{p}_j, \bar{q}_j, p_k, q_k), (p_i, q_i, \bar{p}_j, \bar{q}_j, \bar{p}_k, \bar{q}_k)$$
For any index (say $i$), if $l_i = x_i$ (that is, the variable is in the positive form), then $p_i = y_i$ and $q_i = z_i$.
On the other hand, if $l_i = \bar{x}_i$, then $p_i = y_i$ and $q_i = \bar{z}_i$.
So for example, for the clause $(x_2, \bar{x}_7, x_9)$ in $\psi$, we have the following four clauses:
$$(y_2, z_2, y_7, \bar{z}_7, y_9, z_9), (y_2, z_2, y_7, \bar{z}_7, \bar{y}_9, \bar{z}_9), (y_2, z_2, \bar{y}_7, z_7, y_9, z_9), (y_2, z_2, \bar{y}_7, z_7, \bar{y}_9, \bar{z}_9)$$
Note that property (3) of the lemma holds due to our construction.
For property (1), we argue that for any satisfying assignment for $\psi$, the assignment of variables in $\phi$ as per the rule $x_i = 1$ iff $y_i \neq z_i$ is a satisfying assignment of $\psi$ (in the NAE sense).
This is because for every literal $l$ that makes a clause in $\psi$ true, the two corresponding copies $p$ and $q$ satisfies all the four clauses (in the NAE sense).
For property (2), we prove the contrapositive.
Suppose there is an assignment to the $y, z$ variables in $\phi$ that satisfies at least $(1 - \veps/4)$ fraction of the clauses.
We will argue that the assignment to the variables of $\psi$ as per the rule $x_i = 1$ iff $y_i \neq z_i$ satisfies at least $(1 - \veps)$ fraction of clauses of $\psi$.
First, note that for every set of $4$ clauses in $\phi$ created from a single clause of $\psi$, either $3$ of them are satisfied or all four are satisfied (whatever the assignment of the variable be).
Let $m_1$ be the number of these 4-sets where all 4 clauses are satisfied and let $m_2$ be the number of these 4-sets where 3 clauses are satisfied, where $m=m_1+m_2$. Then we have $4 m_1 + 3 m_2 \geq (1 - \veps/4)\cdot (4m)$ which implies that $m_1 \geq (1 - \veps)m$.
Note that for any of the 4-sets where all 4 clauses are satisfied, the corresponding clause in $\psi$ is satisfied with respect to the assignment as per rule $x_i =1$ iff $y_i \neq z_i$ (since at least one the $p, q$ pairs will have opposite values).
So, the fraction of the clauses satisfied in $\psi$ is at least $\frac{m_1}{m} \geq (1 - \veps)$.
\end{proof}

\begin{lemma} \label{nae6sat-nae3sat} 
Let $0 < \veps < 1$ and $d > 1$.
There is a polynomial time reduction that given an instance $\psi$ of $\mathsf{NAE}6$-$\mathsf{SAT}$ with $m$ clauses  and with each variable appearing in at most $d$ clauses, produces an instance $\phi$ of $\mathsf{NAE}3$-$\mathsf{SAT}$ with $4m$ clauses such that:
\begin{enumerate}
\item If $val'(\psi)=1$, then $val'(\phi) = 1$, and
\item If $val'(\psi) \leq (1-\veps)$, then $val'(\phi) \leq (1 - \veps/4)$.
\item Each variable in $\phi$ appears in at most $\max{(d, 2)}$ clauses.
\end{enumerate}
\end{lemma}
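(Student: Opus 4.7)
The plan is to apply the standard NAE chain reduction that splits each $6$-clause into four NAE3-clauses by threading three fresh auxiliary variables through the chain. Concretely, for every clause $(l_1, l_2, l_3, l_4, l_5, l_6)$ of $\psi$ I would introduce (fresh, distinct across different clauses of $\psi$) variables $y_1, y_2, y_3$ and output the four NAE3-clauses
\[
(l_1, l_2, y_1),\ (\bar y_1, l_3, y_2),\ (\bar y_2, l_4, y_3),\ (\bar y_3, l_5, l_6).
\]
This immediately produces an instance $\phi$ with $4m$ clauses. Each original variable of $\psi$ appears in exactly one of the four derived clauses per occurrence, so its total occurrence count is unchanged and is at most $d$; each new auxiliary variable appears in exactly $2$ clauses. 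Hence property (3) holds with bound $\max(d,2)$.

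For property (1) I would verify the standard chain-reduction fact: if $(l_1,\ldots,l_6)$ is NAE-satisfied, then there exist values for $y_1, y_2, y_3$ making all four derived clauses NAE-satisfied. Propagating left-to-right, whenever the two non-auxiliary literals in a clause take the same value the auxiliary is forced to the opposite value, and whenever they differ the auxiliary is free; because the original clause contains both truth values, a short case check shows that this forcing never conflicts at the last clause.

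For property (2) I would argue the contrapositive, using the key combinatorial claim that if all of $l_1,\ldots,l_6$ evaluate to the same value then, regardless of how $y_1, y_2, y_3$ are set, at least one of the four derived clauses fails to be NAE-satisfied. Assuming all $l_i=1$ (the other case is symmetric), the NAE constraint on the first clause forces $y_1=0$, then $y_2=0$, then $y_3=0$, and the last clause becomes $(1,1,1)$, violating NAE. Consequently, for each 4-set either all four derived clauses can be NAE-satisfied by some auxiliary choice (and then the induced assignment NAE-satisfies the parent clause of $\psi$) or at most three can. Letting $m_1$ and $m_2$ count the 4-sets of the two types and repeating the counting step from the end of Lemma~\ref{esat-nae6sat} verbatim gives $m_1 \ge (1-\veps)m$, which is exactly what is needed for property (2) applied to the induced assignment on the original variables.

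The main obstacle I anticipate is purely bookkeeping: spelling out the deterministic forcing in the ``all literals equal'' case carefully enough to rule out any ingenious choice of auxiliaries that might rescue the chain. Once that is nailed down, the remainder mirrors the counting argument already used in Lemma~\ref{esat-nae6sat}.
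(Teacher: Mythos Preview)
Your proposal is correct and follows essentially the same approach as the paper: the paper uses the identical chain construction $(a_i,b_i,x_i),(\bar x_i,c_i,y_i),(\bar y_i,d_i,z_i),(\bar z_i,e_i,f_i)$, verifies properties (1) and (3) in the same way, and for property (2) argues the contrapositive via the same counting (at most three clauses of a 4-set can be satisfied when the parent clause is not NAE, so $4m_4+3(m-m_4)\ge(1-\veps/4)\cdot 4m$ forces $m_4\ge(1-\veps)m$). The only cosmetic difference is that the paper indexes the count by the number $j$ of satisfied clauses per 4-set under the fixed assignment rather than by whether the parent clause is NAE-satisfied, but the arithmetic is identical.
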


\begin{proof}
For every clause $C_i = (a_i, b_i, c_i, d_i, e_i, f_i)$ in $\psi$, we construct the following four clauses in $\phi$ (let us call it a 4-set): $(a_i, b_i, x_i),(\bar{x}_i, c_i, y_i),(\bar{y}_i, d_i, z_i), (\bar{z}_i, e_i, f_i)$, introducing new variables $x_i, y_i, z_i$. Property (3) trivially holds for this construction. For every satisfying assignment for $\psi$, there is a way to set the clause variables $x_i, y_i, z_i$ for every $i$ such that all four clauses in the 4-set corresponding to clause $C_i$ are satisfied. So, property (1) holds. We show property (2) using contraposition. Consider any assignment of $\phi$ that satisfies at least $(1 - \veps/4)$ fraction of the clauses. Let $m_j$ denote the number of 4-sets such that as per this assignment $j$ out of $4$ clauses are satisfied. Then, we have $\sum_{j=0}^{4} j \cdot m_j \geq (1 - \veps/4) \cdot (4m)$. This implies that: $3 \cdot \sum_{j=0}^{3} m_j + 4 m_4 \geq (1 - \veps/4) \cdot (4m)$ which implies that $m_4 \geq (1 - \veps) m$. Now, note that for any 4-set such that all four clauses are satisfied, the corresponding clause in $\psi$ is satisfied by the same assignment to the variables. This implies that there is an assignment that makes at least $(1 - \veps)$ fraction of clauses true in $\psi$.
\end{proof}

We come up with the following hypothesis which holds given that Hypothesis 2 holds, and is crucial for our analysis.

\begin{quote} 
\underline{\bf Hypothesis 3}: There exists constants $\veps,d>0$ such that the following holds: There does not exist an algorithm that, given a $\mathsf{NAE}3$-$\mathsf{SAT}$ formula $\psi$ with $m$ clauses with each variable appearing in at most $d$ clauses, distinguishes whether $val'(\psi) = 1$ or $val'(\psi)\leq (1-\veps)$, and runs in time better than $2^{\Omega \left(\frac{m}{\text{poly} \log m} \right)}$.
\end{quote}

The lemma given below follows easily from the Lemmas \ref{esat-nae6sat} and \ref{nae6sat-nae3sat} above.
\begin{lemma}\label{lemma:H2toH3}
If Hypothesis 2 holds, then so does Hypothesis 3.
\end{lemma}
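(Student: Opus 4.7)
The plan is to chain the two reductions we have already proved, Lemma~\ref{esat-nae6sat} and Lemma~\ref{nae6sat-nae3sat}, and observe that their composition is an approximation-preserving, polynomial-time reduction from bounded-occurrence $\mathsf{E}3$-$\mathsf{SAT}$ to bounded-occurrence \NAET{} with only a constant blow-up in the number of clauses and a constant shrinkage of the gap. Given this, any algorithm violating Hypothesis~3 can be transported back to one violating Hypothesis~2.

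Concretely, I would argue by contraposition. Suppose Hypothesis~3 fails: for every constant pair $(\veps',d')$, there is an algorithm $\OA$ that, on every \NAET{} instance with $m''$ clauses and each variable in at most $d'$ clauses, distinguishes $val'(\cdot)=1$ from $val'(\cdot)\leq 1-\veps'$ in time $2^{o(m''/\mathrm{poly}\log m'')}$. Fix the $(\veps,d)$ supplied by Hypothesis~2. Given an $\mathsf{E}3$-$\mathsf{SAT}$ instance $\psi$ with $m$ clauses, each variable appearing in at most $d$ clauses, first apply Lemma~\ref{esat-nae6sat} to obtain an $\mathsf{NAE}6$-$\mathsf{SAT}$ instance $\phi_1$ with $4m$ clauses, each variable occurring in at most $4d$ clauses, with the $1$ versus $1-\veps$ gap becoming a $1$ versus $1-\veps/4$ gap. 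Then apply Lemma~\ref{nae6sat-nae3sat} to $\phi_1$ to obtain a \NAET{} instance $\phi_2$ with $16m$ clauses, each variable appearing in at most $\max(4d,2)=4d$ clauses (for $d\geq 1$), and a $1$ versus $1-\veps/16$ gap. Both reductions are polynomial-time, so $\phi_2$ is produced in time $\mathrm{poly}(m)$.

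Now instantiate the supposed algorithm $\OA$ with the constants $\veps'=\veps/16$ and $d'=4d$ and run it on $\phi_2$. Since the gap between $val'(\phi_2)=1$ and $val'(\phi_2)\leq 1-\veps/16$ corresponds exactly to the gap between $\psi$ being satisfiable and $val(\psi)\leq 1-\veps$, the output of $\OA$ decides the gap problem for $\psi$. With $m''=16m$, the running time is
\[
2^{o(m''/\mathrm{poly}\log m'')} \;=\; 2^{o(16m/\mathrm{poly}\log(16m))} \;=\; 2^{o(m/\mathrm{poly}\log m)},
\]
plus the polynomial-time overhead of the two reductions. This contradicts Hypothesis~2 and completes the proof.

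There is no real obstacle here; the argument is just bookkeeping of three parameters through the composition: the number of clauses blows up by a factor of $16$ (absorbed into the $\mathrm{poly}\log$), the gap shrinks by a factor of $16$ (still a positive constant, which is all Hypothesis~3 requires), and the maximum variable occurrence remains bounded by the constant $4d$ (the second reduction introduces new clause-variables, each of which appears in at most two clauses, so it never worsens the occurrence bound below $\max(\cdot,2)$). The only subtle point worth double-checking during the write-up is that $\mathrm{poly}\log(16m)=\Theta(\mathrm{poly}\log m)$, so the subexponential running-time bound is genuinely preserved under the linear blow-up.
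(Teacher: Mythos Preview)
Your proposal is correct and takes essentially the same approach as the paper, which simply asserts that the lemma ``follows easily from Lemmas~\ref{esat-nae6sat} and~\ref{nae6sat-nae3sat}'' without spelling out the composition. Your careful bookkeeping of the blow-up in clauses ($\times 16$), the gap shrinkage ($\veps\mapsto\veps/16$), and the occurrence bound ($d\mapsto 4d$) is exactly the content the paper leaves implicit.
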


We now give a reduction from the gap version of $\mathsf{NAE}3$-$\mathsf{SAT}$ to the gap version of {\em monotone} $\mathsf{NAE}3$-$\mathsf{SAT}$ that has no negative variables. Note that because of the NAE (not all equal) property, setting all variables to $1$ does not necessarily satisfy the formula.

\begin{lemma} \label{nae3sat-mononae3sat} Let $0 < \veps < 1$ and $d > 1$. There is a polynomial time reduction that given an instance $\psi$ of $\mathsf{NAE}3$-$\mathsf{SAT}$ with $m$ clauses and with each variable appearing in at most $d$ clauses, produces an instance $\phi$ of {\em monotone} $\mathsf{NAE}3$-$\mathsf{SAT}$ with $O(m)$ clauses such that:
\begin{enumerate}
\item If $val'(\psi)=1$, then $val'(\phi) = 1$, and
\item If $val'(\psi) \leq (1-\veps)$, then $val'(\phi) \leq (1 - \frac{\veps}{1+12d})$.
\item Each variable in $\phi$ appears in at most $4d$ clauses.
\end{enumerate}
\end{lemma}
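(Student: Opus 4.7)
My plan is to construct $\phi$ by (i) introducing, for each variable $x_i$ of $\psi$, a fresh companion $y_i$ intended to carry the value $\bar{x}_i$; (ii) replacing every negated literal $\bar{x}_i$ in a clause of $\psi$ by $y_i$ while keeping positive literals as they are, producing the $m$ monotone \emph{main clauses} of $\phi$; and (iii) appending, for each $x_i$, monotone NAE$3$ clauses that penalise the setting $x_i = y_i$. The heart of the proof is a small equality-breaking gadget: since monotone NAE$3$-SAT is invariant under globally flipping every variable, no single clause can rule out $x_i = y_i$.

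For the gadget I propose the following $6$-clause monotone NAE$3$ block on $x_i$, $y_i$ and four fresh auxiliary variables $a,b,c,d$:
\[
(x_i, y_i, a),\ (x_i, y_i, b),\ (a, b, c),\ (a, b, d),\ (c, d, x_i),\ (c, d, y_i).
\]
A short case analysis shows that when $x_i = y_i$ at least one of these six clauses must be NAE-violated (the first two force $a = b = \bar{x}_i$; the next two then force $c = d = x_i$; the last two become monochromatic), while when $x_i \neq y_i$ the auxiliaries can be chosen to NAE-satisfy all six. I attach $d_i$ independent copies of this gadget (with fresh auxiliaries each time) to $x_i$, where $d_i \leq d$ is the number of $\psi$-clauses containing $x_i$, so that each \emph{bad} variable (one with $\beta(x_i) = \beta(y_i)$ in some candidate $\phi$-assignment $\beta$) is charged at least $d_i$ unsatisfied gadget clauses.

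Properties (1) and (3) are then immediate. For (1), extend any NAE-satisfying assignment $\alpha$ of $\psi$ to $\phi$ by $\beta(x_i) := \alpha(x_i)$, $\beta(y_i) := \neg \alpha(x_i)$, and fill in each gadget's auxiliaries by the $x_i \neq y_i$ case. For (3), each $x_i$ (and each $y_i$) appears in at most $d_i$ main and $3 d_i$ gadget clauses, hence $\leq 4 d_i \leq 4 d$ times; each auxiliary variable appears in exactly $3$ clauses; and the total clause count is $m + 6 \sum_i d_i = m + 18 m = 19 m = O(m)$.

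For property (2) I prove the contrapositive. Take any assignment $\beta$ of $\phi$ leaving fewer than $\veps m'/(1 + 12 d)$ clauses unsatisfied, where $m'$ denotes the total clause count. Split the unsatisfied count as $U = M + G$ (main vs.\ gadget), and set $B := \{ i : \beta(x_i) = \beta(y_i)\}$. By the gadget guarantee, $G \geq \sum_{i \in B} d_i$. Define $\alpha(x_i) := \beta(x_i)$: a main clause of $\phi$ satisfied by $\beta$ can correspond to a NAE-violated clause of $\psi$ under $\alpha$ only if it uses a negated literal on some $i \in B$, and at most $\sum_{i \in B} d_i$ clauses of $\psi$ do so. Hence the number of $\psi$-clauses NAE-violated by $\alpha$ is at most $M + \sum_{i \in B} d_i \leq M + G = U < \veps m'/(1 + 12 d) \leq \veps m$, where the last inequality uses $m' = 19 m \leq (1 + 12 d) m$ for $d \geq 2$; the case $d = 1$ is vacuous because every variable then appears in at most one clause, making $\psi$ trivially NAE-satisfiable. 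This yields $\mathrm{val}'(\psi) > 1 - \veps$, contradicting the hypothesis of (2).

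The hard part will be designing and verifying the monotone NAE$3$ gadget: the global bit-flip symmetry of monotone NAE$3$-SAT rules out every shorter gadget, so one must exhibit six clauses and carefully run through the (small but delicate) case analysis to confirm that $x_i = y_i$ always forces at least one of them to be NAE-unsatisfied. Once the gadget is in place, the variable-occurrence accounting and the bad-variable gap transfer are routine.
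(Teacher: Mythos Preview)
Your proof is correct and follows the same overall strategy as the paper: replace each negated literal by a fresh companion variable and append, for every original variable, copies of a monotone $\mathsf{NAE}3$ gadget that penalises the companion agreeing with the original. The paper's gadget is smaller---just four clauses $(y_i,z_i,t),(y_i,z_i,u),(y_i,z_i,v),(t,u,v)$ on three auxiliaries, repeated $d$ times per variable (giving $m+4dn$ clauses total)---and its soundness argument for property~(2) first \emph{normalises} any near-optimal assignment to one with $y_i\neq z_i$ for every $i$ (flipping a bad $z_i$ recovers $d$ gadget clauses while sacrificing at most $d$ main clauses, so no net loss), after which the count is immediate. You instead use a six-clause gadget and charge bad variables directly against their unsatisfied gadget clauses without any swap step. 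Both routes deliver the stated gap; the paper's gadget is more economical, while your direct-charging argument is a little cleaner in that it avoids the exchange step.
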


\begin{proof}
We construct $\phi$ in the following manner: Substitute all positive literals of the variable $x_i$ with $y_i$ and all negative literals with $z_i$ for new variables $y_i, z_i$. Also, for every variable $x_i$, add the following $4d$ clauses:
$$\{(y_i, z_i, t_i^j), (y_i, z_i, u_i^j), (y_i, z_i, v_i^j), (t_i^j, u_i^j, v_i^j)\}_{j=1}^{d}$$
where $t_i^j, u_i^j, v_i^j$ for $1 \leq j \leq d$ are new variables. Note that the only way to satisfy all the above clauses is to have $y_i \neq z_i$.
Let $m'$ denote the total number of clauses in $\phi$.
So, $m' = m + 4dn$. Also, from the construction, each variable in $\phi$ appears in at most $4d$ clauses. This proves property (3).
Property (1) follows from the fact that for any satisfying assignment for $\psi$, there is a way to extend this assignment to variables in $\phi$ such that all clauses are satisfied.
For all $i$, $y_i = x_i$ and $z_i =  \bar{x}_i$. All the new variables $t, u, v$ can be set so as to make all the new clauses satisfied.

We argue property (2) using contraposition.
Suppose there is an assignment to variables in $\phi$ that makes at least $(1 - \veps/(1+12d))$ fraction of clauses satisfied.
First, note that there is also an assignment that makes at least $(1 - \veps/(1+12d))$ fraction of the clauses satisfied and in which for all $i$, $y_i \neq z_i$. This is because $3d$ out of $4d$ of the following clauses can be satisfied when $y_i = z_i$:
$$\{(y_i, z_i, t_i^j), (y_i, z_i, u_i^j), (y_i, z_i, v_i^j), (t_i^j, u_i^j, v_i^j)\}_{j=1}^{d}$$
However, if we flip one of $y_i, z_i$, then the number of above clauses satisfied can be $4d$ and we might lose out on at most $d$ clauses since a variable appears in at most $d$ clauses in $\psi$.
Let $m'$ be the number of clauses corresponding to the original clauses that are satisfied with this assignment. So, we have $m' + 4nd > (1 - \frac{\veps}{(1+12d)})(m+4nd)$ which gives:
\begin{eqnarray*}
m' > (1 - \veps) m  + \frac{12md\veps}{1 + 12d}  - \frac{4nd\veps}{1+12d} \geq (1 - \veps)m \quad \textrm{(since $3m \geq n$)}
\end{eqnarray*}
This completes the proof of the lemma.
\end{proof}

We come up with the following hypothesis which holds given that Hypothesis 3 holds.
\begin{quote}
\underline{\bf Hypothesis 4}: There exists constants $\eps,d>0$ such that the following holds: There does not exist an algorithm that, given a monotone $\mathsf{NAE}3$-$\mathsf{SAT}$ formula $\psi$ with $m$ clauses with each variable appearing in at most $d$ clauses, distinguishes whether $val'(\psi) = 1$ or $val'(\psi)\leq (1-\eps)$, and runs in time better than $2^{\Omega \left(\frac{m}{\text{poly} \log m} \right)}$.
\end{quote}

The lemma below follows easily from Lemma \ref{nae3sat-mononae3sat} mentioned in above.
\begin{lemma}\label{lemma:H3toH4}
If Hypothesis 3 holds, then so does Hypothesis 4.
\end{lemma}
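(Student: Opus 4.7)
The plan is to prove the contrapositive: assuming Hypothesis 4 fails, build an algorithm that refutes Hypothesis 3. First I would fix the constants $\veps, d > 0$ that Hypothesis 3 guarantees for $\mathsf{NAE}3\text{-}\mathsf{SAT}$ with each variable appearing in at most $d$ clauses. Lemma~\ref{nae3sat-mononae3sat} then designates the pair $\veps' := \veps/(1+12d)$ and $d' := 4d$ as candidate constants for the monotone version; both are constants because $d$ is.

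Next, suppose for contradiction that there is an algorithm $\OA$ which, given a monotone $\mathsf{NAE}3\text{-}\mathsf{SAT}$ instance on $m'$ clauses with each variable in at most $d'$ clauses, distinguishes $val'(\phi)=1$ from $val'(\phi) \le 1-\veps'$ in time $2^{o(m'/\text{poly}\log m')}$. Given an arbitrary $\mathsf{NAE}3\text{-}\mathsf{SAT}$ instance $\psi$ on $m$ clauses with variable-occurrence bound $d$, I would apply the polynomial-time reduction of Lemma~\ref{nae3sat-mononae3sat} to obtain a monotone $\mathsf{NAE}3\text{-}\mathsf{SAT}$ instance $\phi$ with $m' = O(m)$ clauses and variable-occurrence bound $4d$, satisfying: if $val'(\psi)=1$ then $val'(\phi)=1$, and if $val'(\psi)\le 1-\veps$ then $val'(\phi)\le 1-\veps'$. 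Running $\OA$ on $\phi$ therefore distinguishes the two cases for $\psi$ as well.

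Finally, I would check the running time. Since $m' = \Theta(m)$, we have $\text{poly}\log m' = \Theta(\text{poly}\log m)$ and thus $m'/\text{poly}\log m' = \Theta(m/\text{poly}\log m)$. Combined with the polynomial overhead of the reduction, the composed algorithm runs in time $2^{o(m/\text{poly}\log m)}$, contradicting Hypothesis 3.

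The only delicate point, and the one I would verify carefully, is the bookkeeping on constants: one must confirm that the gap parameter $\veps'$ used to witness Hypothesis 4 can indeed be chosen as a function of the Hypothesis-3 constants $(\veps,d)$ alone (rather than depending on the instance), and that $4d$ remains a constant variable-occurrence bound. Both hold because $\veps$ and $d$ are absolute constants, so no hidden dependence on $m$ sneaks into $\veps'$ or $d'$. Everything else is a direct plug-in of Lemma~\ref{nae3sat-mononae3sat}.
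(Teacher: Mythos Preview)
Your proposal is correct and is exactly the argument the paper has in mind: the paper simply states that the lemma ``follows easily from Lemma~\ref{nae3sat-mononae3sat}'' without spelling out the contrapositive, so your write-up is a faithful (and more detailed) rendering of the intended proof. The bookkeeping on constants and the $m' = \Theta(m)$ running-time transfer are handled correctly.
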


We provide a reduction from the gap version of monotone $\mathsf{NAE}3$-$\mathsf{SAT}$ to a gap version of $2$-colorability of $3$-uniform bounded degree hypergraph. 

\begin{lemma} \label{mononae3sat-2colorhyper} 
Let $0 < \veps < 1$ and $d > 1$.
There exists a polynomial time reduction that given a monotone $\mathsf{NAE}3$-$\mathsf{SAT}$ instance $\psi$ with $m$ clauses and with every variable appearing in at most $d$ clauses, outputs an instance $H$ of $3$-uniform hypergraph with $O(m)$ vertices and hyperedges and with bounded degree $d$ such that if $\psi$ is satisfiable, then $H$ is $2$-colorable, and if at most $(1-\eps)$-fraction of clauses of $\psi$ are satisfiable, then any $2$-coloring of $H$ would have at most $(1-\eps)$-fraction of edges that are bichromatic.
\end{lemma}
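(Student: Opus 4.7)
The plan is to apply the most direct reduction possible: identify the variables of $\psi$ with the vertices of $H$, and for every clause $C = (x_{i_1}, x_{i_2}, x_{i_3})$ of $\psi$ create one hyperedge $e_C = \{x_{i_1}, x_{i_2}, x_{i_3}\}$. The clauses produced by the preceding reductions (Lemmas \ref{nae6sat-nae3sat} and \ref{nae3sat-mononae3sat}) always involve three distinct variables --- the auxiliary variables introduced there are fresh in each clause --- so $H$ is a bona fide $3$-uniform hypergraph. The size bounds are immediate: $|V(H)| \leq 3m = O(m)$, $|E(H)| = m$, and since each variable appears in at most $d$ clauses of $\psi$, each vertex of $H$ lies in at most $d$ hyperedges.

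The key observation is a natural bijection between truth assignments $\sigma : \mathrm{Vars}(\psi) \to \{0,1\}$ and $2$-colorings $\chi : V(H) \to \{0,1\}$, given by $\chi(x) = \sigma(x)$. Because $\psi$ is \emph{monotone}, every literal in a clause $C = (x_{i_1}, x_{i_2}, x_{i_3})$ appears positively, so $C$ is NAE-satisfied by $\sigma$ iff the three values $\sigma(x_{i_1}), \sigma(x_{i_2}), \sigma(x_{i_3})$ are not all equal, which is precisely the condition that the hyperedge $e_C$ is bichromatic under $\chi$. Consequently, for every assignment/coloring pair matched by the bijection, the fraction of clauses NAE-satisfied by $\sigma$ equals the fraction of hyperedges made bichromatic by $\chi$.

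Both directions of the lemma then follow at once. For completeness: if $\psi$ admits a satisfying assignment $\sigma$, the coloring $\chi = \sigma$ renders every hyperedge bichromatic, so $H$ is $2$-colorable. For soundness I would argue the contrapositive: if some $2$-coloring of $H$ were to make strictly more than a $(1-\eps)$-fraction of hyperedges bichromatic, the corresponding assignment would NAE-satisfy strictly more than a $(1-\eps)$-fraction of clauses of $\psi$, contradicting the hypothesis $\mathrm{val}'(\psi) \leq 1-\eps$.

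There is essentially no obstacle in this step; the reduction is nearly tautological once one recognises that monotone \NAET{} is literally the $2$-colouring problem on $3$-uniform hypergraphs, with variables playing the role of vertices and clauses playing the role of hyperedges. The only content beyond this definitional equivalence is the verification of $3$-uniformity (handled by the fresh-variable structure of the previous reductions) and the degree/size accounting, both of which are direct.
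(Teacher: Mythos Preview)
Your proposal is correct and matches the paper's proof essentially line for line: both take vertices to be the variables, hyperedges to be the clauses, and observe that NAE-satisfaction of a monotone clause is literally the bichromaticity of the corresponding hyperedge. If anything, your write-up is slightly more careful than the paper's in explicitly checking that the three variables in each clause are distinct (so that $H$ is genuinely $3$-uniform), a point the paper leaves implicit.
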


\begin{proof} The reduction constructs a hypergraph $H(V,E)$ as follows.
The set of vertices $V$ correspond to the set of variables (all of them positive literals) of the monotone $\mathsf{NAE}3$-$\mathsf{SAT}$ instance $\psi$.
The set of edges $E$ correspond to the set of clauses all of which have $3$ literals, and therefore every hyperedge is of size $3$.
The resulting hypergraph is $3$-uniform, and since every variable appears in at most $d$ clauses, the hypergraph $H$ is of bounded degree $d$, and $|V|=O(m)$ and $|E|=O(m)$.
If there exists a satisfying assignment for $\psi$, then every edge in $H$ is bichromatic and the hypergraph would be $2$-colorable, and if at most $(1-\eps)$-fraction of clauses are satisfiable by any assignment, then at most $(1-\eps)$-fraction of edges of $H$ are bichromatic.
\end{proof}

We come up with the following hypothesis which holds given that Hypothesis 4 holds.
\begin{quote}
\underline{\bf Hypothesis 5}: There exists constants $\eps,d>0$ such that the following holds: There does not exist an algorithm that, given a $3$-uniform hypergraph $H$ with $m$ vertices and where every vertex has degree at most $d$, distinguishes whether $H$ is bichromatic or at most $(1-\eps)$-fraction of edges are bichromatic, and runs in time better than $2^{\Omega \left(\frac{m}{\text{poly} \log m} \right)}$.
\end{quote}

The lemma below follows easily from Lemma \ref{mononae3sat-2colorhyper} above.
\begin{lemma}\label{lemma:H4toH5}
If Hypothesis 4 holds, then so does Hypothesis 5.
\end{lemma}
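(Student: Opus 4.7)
The plan is a direct contrapositive argument that plugs the reduction of Lemma \ref{mononae3sat-2colorhyper} into a hypothetical algorithm for the hypergraph gap problem. Suppose Hypothesis 5 fails: for every constants $\eps,d>0$ satisfying Hypothesis 4, there is an algorithm $\OA$ that, on any $3$-uniform hypergraph $H$ with $M$ vertices and degree at most $d$, distinguishes the ``$H$ is $2$-colorable'' case from the ``at most $(1-\eps)$ fraction of edges are bichromatic'' case in time $2^{o(M/\text{poly}\log M)}$. I will show that Hypothesis 4 then fails for the same $\eps$ and $d$.

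Given an arbitrary monotone $\mathsf{NAE}3$-$\mathsf{SAT}$ instance $\psi$ with $m$ clauses and each variable appearing in at most $d$ clauses, apply the polynomial-time reduction guaranteed by Lemma \ref{mononae3sat-2colorhyper} to produce a $3$-uniform hypergraph $H$ with $M=O(m)$ vertices, $O(m)$ hyperedges, and maximum degree $d$. By the two guarantees of that lemma: if $val'(\psi)=1$ then $H$ is $2$-colorable, and if $val'(\psi)\leq 1-\eps$ then every $2$-coloring of $H$ leaves at least an $\eps$ fraction of edges monochromatic, i.e.\ at most $(1-\eps)$ fraction of edges are bichromatic. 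Thus the two cases of the SAT gap problem map precisely to the two cases of the hypergraph gap problem with the same $\eps$, so running $\OA$ on $H$ decides the promise problem in Hypothesis 4 for $\psi$.

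The running time of this combined procedure is the polynomial time of the reduction plus $2^{o(M/\text{poly}\log M)}$. Since $M=O(m)$, we have $M/\text{poly}\log M = \Theta(m/\text{poly}\log m)$, so the overall running time is $2^{o(m/\text{poly}\log m)}$, contradicting Hypothesis 4. The only thing to verify carefully is this parameter bookkeeping step: that the blow-up from $m$ clauses to $O(m)$ vertices is only a constant factor, so the $\text{poly}\log$ in the exponent is preserved up to constants, and the degree bound of $H$ equals the degree bound assumed in Hypothesis 4 (both are $d$). Each of these is immediate from the statement of Lemma \ref{mononae3sat-2colorhyper}, so no separate work is needed.

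I expect no genuine obstacle here; the entire content of the lemma is the straightforward chaining of the reduction with any hypothetical faster algorithm. The substantive mathematical work lies in Lemma \ref{mononae3sat-2colorhyper} (which already gives a gap-preserving polynomial reduction with matching parameters) rather than in this wrapper lemma.
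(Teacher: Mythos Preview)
Your proposal is correct and follows exactly the approach the paper intends: the paper itself offers no detailed argument, merely stating that the lemma ``follows easily from Lemma \ref{mononae3sat-2colorhyper} above,'' and your contrapositive write-up is precisely the standard unpacking of that remark. The parameter bookkeeping you flag (the $O(m)$ blow-up in vertices and the preservation of the degree bound $d$ and gap $\eps$) is indeed all that is needed, and it is immediate from the statement of Lemma \ref{mononae3sat-2colorhyper}.
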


We now give a reduction from $2$-colorability in $3$-uniform hypergraph $H$ with constant bounded degree to a correlation clustering instance on a complete graph $G$. We use the reduction as given in \cite{CGW2005} for our purposes.

\begin{lemma}[\cite{CGW2005}] \label{hyper-cor} 
Let $\veps, d > 0$.
There is a polynomial-time reduction that given a $3$-uniform hypergraph $H(V,E)$ with $m$ vertices and where each vertex appears in at most $d$ hyperedges, outputs an instance of the correlation clustering problem where the graph $G(\Vpr,\Epr)$ has $N=O(m)$ vertices and $M=2N$ edges with edges in $\Epr$ are labeled as `positive' and all the other edges in the complete graph on $\Vpr$ vertices are labeled as `negative' such that the following holds:
\begin{enumerate}
\item If $H$ is 2-colorable, then the cost of the optimal correlation clustering is $M - N$, and

\item If at most $(1 - \veps)$-fraction of hyperedges of $H$ are bi-chromatic, then the optimal cost of correlation clustering is at least $M - (1 - \delta)N$, where $\delta$ is some constant.
\end{enumerate}
\end{lemma}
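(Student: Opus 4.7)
The plan is to construct the correlation clustering instance $G(V',E')$ from $H$ by means of a small local gadget per hyperedge, and then to establish the completeness and soundness claims separately. Concretely, I would take $V'$ to consist of the original vertex set $V$ of $H$ together with a bounded number of auxiliary vertices per hyperedge, and let the positive edge set $\Epr$ encode each hyperedge $e=\{u,v,w\}$ by a short positive gadget (for instance, a triangle on $\{u,v,w\}$ together with a constant number of auxiliary edges used to calibrate the total edge count). Since each vertex of $H$ appears in at most $d$ hyperedges, the number of positive edges is $O(m)$, so $|\Vpr|=N=O(m)$ and $|\Epr|=O(N)$; by a careful choice of auxiliary vertices and edges one can enforce the exact count $M=2N$. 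All remaining pairs in the complete graph on $\Vpr$ are then labeled negative.

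For the completeness direction (part 1), given a proper 2-coloring of $H$ I would cluster $V'$ into two clusters according to the coloring, extending it to the auxiliary vertices in the locally optimal way for each gadget. Since every hyperedge is bichromatic, each gadget contributes a predictable number of within-cluster positive edges and of across-cluster positive edges; verifying that these totals amount to exactly $N$ positive edges inside clusters and zero negative edges inside clusters yields a total disagreement of exactly $M-N$ (the positive edges cut across the two clusters).

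For the soundness direction (part 2), the goal is to show that if no 2-coloring of $H$ leaves more than a $(1-\veps)$-fraction of hyperedges bichromatic, then every clustering of $V'$ has disagreement at least $M-(1-\delta)N$ for some constant $\delta=\Theta(\veps)$. The main step is to convert an arbitrary clustering of $V'$ into a 2-coloring of $H$ without too much loss. Because negative edges dominate the complete graph, any cluster of super-constant size incurs quadratic negative-disagreement cost, so in a near-optimum solution all but (essentially) two clusters can be contracted to singletons at negligible extra cost; this effectively reduces the analysis to a 2-clustering, which induces a 2-coloring of $V$. Each monochromatic hyperedge under this induced 2-coloring forces an additional constant disagreement through its gadget (since the gadget is tuned so that a bichromatic hyperedge is strictly cheaper than a monochromatic one), so an $\veps$-fraction of monochromatic hyperedges yields $\Omega(\veps m)=\Omega(\veps N)$ extra disagreement on top of the completeness bound, giving the claimed $\delta$.

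The main obstacle I expect is the soundness analysis, specifically the rounding argument that reduces an arbitrary correlation clustering (which may use many non-trivial clusters or misassign a few vertices relative to any fixed 2-coloring) to a genuine 2-coloring of $H$ while preserving the linear-in-$N$ gap. Standard techniques---charging misassigned vertices to the positive edges they violate, combined with a cluster-size pruning step that leverages the density of negative edges---should suffice, but they must be executed carefully to obtain an explicit constant $\delta$ independent of $m$ and $d$.
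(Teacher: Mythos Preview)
The paper does not actually prove this lemma; it is simply cited from \cite{CGW2005} and then used as a black box in the chain of hypotheses. So there is no ``paper's own proof'' to compare against beyond the original Charikar--Guruswami--Wirth reduction.

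That said, your proposed argument has a genuine gap that would cause it to fail. In the completeness direction you propose to ``cluster $V'$ into two clusters according to the coloring.'' But $G$ is the \emph{complete} graph on $N$ vertices with only $M=2N$ positive edges; the remaining $\binom{N}{2}-2N=\Theta(N^2)$ edges are negative. Any partition into two nontrivial parts therefore places $\Theta(N^2)$ negative edges inside clusters, so the disagreement cost is $\Theta(N^2)$, not $M-N=N$. The same issue breaks your soundness step: you correctly observe that a super-constant cluster incurs quadratic negative cost, but then draw exactly the wrong conclusion---this forces \emph{every} cluster to be of constant size, so a near-optimal clustering has $\Theta(N)$ clusters, not two.

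The actual reduction in \cite{CGW2005} works differently. The gadget attached to each hyperedge is designed so that its locally optimal clustering consists of a few \emph{small} clusters (pairs or triples forming positive cliques), and the $2$-coloring of $H$ dictates \emph{which} of the constant number of local groupings is chosen inside each gadget. Globally the optimal clustering therefore has $k=\Theta(N)=\Theta(m)$ clusters. This is essential for the paper's purposes: the whole point of Lemma~\ref{lemma:H5toH6} is that $k$ scales with $m$, so that a $2^{\Omega(m/\mathrm{poly}\log m)}$ lower bound for the hypergraph problem translates into a $2^{\Omega(k/\mathrm{poly}\log k)}$ lower bound for $\MDA[k]$. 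A reduction targeting $k=2$ would yield nothing. To repair your sketch you would need to redesign the gadget so that bichromaticity of a hyperedge is encoded by a choice among constant-size local clusterings, and then argue soundness by a local charging argument per gadget rather than by collapsing to a global $2$-partition.
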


We come up with the following hypothesis which holds given that Hypothesis 5 holds.
\begin{quote}
\underline{\bf Hypothesis 6}: There exists constants $\eps>0$ such that the following holds: There does not exist a $(1+\veps)$-factor approximation algorithm for the $\MDA[k]$ problem that runs in time better than $poly(n) \cdot 2^{\Omega(\frac{k}{poly \log{k}})}$.
\end{quote}

The lemma below follows easily from Lemma \ref{hyper-cor} given above.
\begin{lemma}\label{lemma:H5toH6}
If Hypothesis 5 holds, then so does Hypothesis 6.
\end{lemma}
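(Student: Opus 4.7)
The plan is to prove Lemma \ref{lemma:H5toH6} by contrapositive: assuming that Hypothesis 6 fails, I would construct a subexponential-time distinguisher for the gap hypergraph bichromaticity problem and so contradict Hypothesis 5. Concretely, suppose that for every constant $\veps>0$ there is a $(1+\veps)$-approximation algorithm $\OA$ for $\MDA[k]$ running in time $\mathrm{poly}(n)\cdot 2^{o(k/\mathrm{poly}\log k)}$.

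Given an instance $H$ of the Hypothesis 5 problem on $m$ vertices (with vertex degrees bounded by the constant $d$), the first step is to apply the reduction of Lemma \ref{hyper-cor} to obtain a correlation clustering instance on $N=O(m)$ vertices with $M=2N$ positive edges, and to let $\delta>0$ be the constant gap guaranteed by that lemma. Now pick $k=N$, so that $\MDA[k]$ on the produced instance coincides with unrestricted $\MDA$ (every clustering of $N$ vertices uses at most $N$ clusters), and run $\OA$ with approximation parameter $\veps=\delta/2$.

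The YES/NO analysis is then immediate. In the YES case Lemma \ref{hyper-cor} gives $\mathrm{OPT}=M-N=N$, so $\OA$'s output is at most $(1+\delta/2)\,N$. In the NO case $\mathrm{OPT}\geq M-(1-\delta)N=(1+\delta)N$, so $\OA$'s output is at least $(1+\delta)N$. Thresholding the returned value at $(1+\delta/2)\,N$ therefore distinguishes the two cases. Since $n=N=O(m)$ and $k=O(m)$, the total running time is $\mathrm{poly}(m)\cdot 2^{o(m/\mathrm{poly}\log m)}=2^{o(m/\mathrm{poly}\log m)}$, contradicting Hypothesis 5. The only subtlety worth naming is the alignment of parameters: the constant $\veps$ in Hypothesis 6 has to be chosen below the constant gap $\delta$ supplied by the reduction, and $k$ must be set large enough that the reduced instance is unambiguously an $\MDA[k]$ instance (taking $k=N$ is the safest choice); no real obstacle arises, because Lemma \ref{hyper-cor} already does the work of transporting the hardness gap from hypergraph $2$-colorability to correlation clustering.
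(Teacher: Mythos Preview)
Your proposal is correct and follows the same route the paper intends. The paper gives no detailed argument here, merely asserting that the lemma ``follows easily from Lemma~\ref{hyper-cor}''; your contrapositive proof---apply the reduction of Lemma~\ref{hyper-cor}, set $k=N=O(m)$ so that $\MDA[k]$ coincides with unrestricted $\MDA$ on the produced instance, run the assumed fast $(1+\delta/2)$-approximation, and threshold the output between $N$ and $(1+\delta)N$---is precisely the expected unpacking of that claim.
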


Finally, the proof of Theorem~\ref{thm:l1} follows from chaining together lemmas \ref{lemma:H1toH2}, \ref{lemma:H2toH3}, \ref{lemma:H3toH4}, \ref{lemma:H4toH5}, and \ref{lemma:H5toH6}.

\section{Algorithms for $\MA[k]$ and $\MDA[k]$ in SSAC Framework} \label{sec-upper}

In this section, we give $(1+\eps)$-approximation algorithms for the $\MA[k]$ and $\MDA[k]$ problems within the SSAC framework for any $\eps>0$.

\subsection{$\MA[k]$}
In this section, we will discuss a query algorithm that gives $(1+\veps)$-approximation to the $\MA[k]$ problem. The algorithm that we will discuss is closely related to the non-query algorithm for $\MA[k]$ by Giotis and Guruswami. See Algorithm {\bf MaxAg($k, \veps$)} in \cite{GG2006}. In fact, except for a few changes, this section will look extremely similar to Section 3 in \cite{GG2006}. Given this, it will help if we mention the high-level idea of the Giotis-Guruswami algorithm and point out the changes that can be made within the SSAC framework to obtain the desired result. The algorithm of Giotis and Guruswami proceeds in $m$ iterations, where $m = O(1/\veps)$. The given dataset $V$ is partitioned into $m$ equal parts $V^1,\ldots,V^m$, and in the $i^{th}$ iteration, points in $V^i$ are assigned to one of the $k$ clusters. In order to cluster $V^i$ in the $i^{th}$ iteration, the algorithm samples a set of data points $S^i$, and for all possible $k$-partitions of $S^i$, it checks the agreement of a point $v \in V^i$ with the $k$ clusters of $S^i$. Suppose for a particular clustering $S^i_1,\ldots,S^i_k$ of $S^i$, the agreement of vertices in $V^i$ is maximised. Then the vertices in $V^i$ are clustered by placing them into the cluster that maximises their agreement with respect to $S^i_1,\ldots,S^i_k$. Trying out all possible $k$-partitions of $S^i$ is an expensive operation in the Giotis-Guruswami algorithm (since the running time becomes $\Omega(k^{|S^i|})$). This is where the same-cluster queries help. Instead of trying out all possible $k$-partitions of $S^i$, we can make use of the same-cluster queries to find a {\em single} appropriate $k$-partition of $S^i$ in the $i^{th}$ iteration. This is the clustering that matches the ``hybrid'' clustering of Giotis and Guruswami. So, the running time of the $i^{th}$ iteration improves from $O(k^{|S^i|})$ to $O(k \cdot |S^i|)$. Moreover, the number of same-cluster queries made in the $i^{th}$ iteration is $k \cdot |S^i|$, thus making the total number of same-cluster queries to be $O(\frac{k}{\veps} \cdot |S^i|)$. The theorem is given below. The details of the proof of this theorem is not given since it trivially follows from Giotis and Guruswami (see Theorem 3.2 in \cite{GG2006}).

\begin{theorem}
There is a query algorithm $\mathsf{QueryMaxAg}$ that behaves as follows: On input $\veps, \delta$ and a labelling $\mathcal{L}$ of the edges of a complete graph $G$ with $n$ vertices, with probability at least $(1-\delta)$, algorithm $\mathsf{QueryMaxAg}$ outputs a $k$ clustering of the graph such that the number of agreements induced by this $k$-clustering is at least $OPT-\veps n^2/2$, where $OPT$ is the optimal number of agreements induced by any $k$-clustering of $G$. The running time of the algorithm is $O \left( \frac{nk}{\veps^3} \log{\frac{k}{\veps^2 \delta}}\right)$. Moreover, the number of same-cluster queries made by $\mathsf{QueryMaxAg}$ is $O \left( \frac{k}{\veps^3} \log{\frac{k}{\veps^2 \delta}}\right)$.
\end{theorem}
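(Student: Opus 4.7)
The plan is to mirror the Giotis--Guruswami algorithm for $\MA[k]$ step by step, substituting a single query-driven partitioning call for their exhaustive enumeration over all $k$-partitions of the sampled set. Since the paper explicitly signals that the proof follows from the GG template, my job in the proof is really to (i) describe the modified algorithm, (ii) argue that the vertex set produced by same-cluster queries plays exactly the role of the ``hybrid'' partition in GG, and (iii) tally queries and running time.

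First I would partition the vertex set $V$ arbitrarily into $m = \Theta(1/\veps)$ blocks $V^1,\ldots,V^m$ of nearly equal size, and process them in turn. For the $i^{th}$ block, draw a uniform random sample $S^i \subseteq V \setminus V^i$ of size $s = \Theta\!\left(\frac{1}{\veps^2}\log\frac{k}{\veps^2\delta}\right)$; this is precisely the sample size GG require in order to estimate the size of the intersection of each optimal cluster with the neighborhood of any vertex up to additive error $\veps \cdot |V|/m$ with failure probability at most $\delta/m$ per block, via Hoeffding plus union bound. Next, use the oracle to produce the partition of $S^i$ induced by an arbitrary but fixed optimum $C^\star$: scan the vertices of $S^i$ one at a time, maintain a list of representatives (one per discovered cluster), and for each new vertex ask a same-cluster query against each representative until either a match is found or a new cluster is opened. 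This recovers $(S^i_1,\ldots,S^i_k)$ with at most $k \cdot |S^i|$ queries in the $i^{th}$ iteration. Finally, exactly as in GG, assign every $v \in V^i$ to the cluster $S^i_j$ that maximizes $|\{u \in S^i_j : (v,u) \text{ is positive}\}| + |\{u \in S^i \setminus S^i_j : (v,u) \text{ is negative}\}|$ scaled appropriately.

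For correctness I would appeal directly to the GG analysis: the only place their argument examines possible partitions of $S^i$ is when they consider the ``correct'' one, namely the restriction of the optimum to $S^i$ --- and we are supplying exactly that partition. Therefore their block-wise loss bound of $\veps n^2/(2m)$ agreements per iteration carries over verbatim, and a union bound across the $m = O(1/\veps)$ blocks yields total loss $\leq \veps n^2/2$ with probability $1 - \delta$. Thus the output clustering realizes at least $\mathrm{OPT} - \veps n^2/2$ agreements.

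The resource count is then mechanical: the per-iteration query cost $k\cdot s$, summed over $m = O(1/\veps)$ iterations, gives the claimed query complexity $O\!\left(\frac{k}{\veps^3}\log\frac{k}{\veps^2\delta}\right)$, while the per-iteration running time is dominated by evaluating the $k$ scores against $S^i$ for each of the $|V^i| = O(\veps n)$ vertices, costing $O(|V^i| \cdot k \cdot s)$ per block and hence $O\!\left(\frac{nk}{\veps^3}\log\frac{k}{\veps^2\delta}\right)$ overall. There is no genuine obstacle here: the only non-trivial ingredient beyond GG is the simple but crucial observation that $k \cdot |S^i|$ queries suffice to recover the optimal partition of $S^i$ without any enumeration, which removes the $k^{|S^i|}$ factor that makes the original GG running time exponential in $k$.
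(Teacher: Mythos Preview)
Your proposal is correct and takes essentially the same approach as the paper, which explicitly defers the proof to Giotis--Guruswami after observing that same-cluster queries replace the exhaustive enumeration over $k$-partitions of each sample $S^i$. The only subtlety worth flagging is that the paper specifies the oracle-obtained partition of $S^i$ should match GG's \emph{hybrid} clustering rather than the pure optimum; since your sample $S^i \subseteq V \setminus V^i$ may contain vertices from already-processed blocks, for those you should use the algorithm's existing assignment (no query needed) and reserve queries for vertices in unprocessed blocks---a cosmetic adjustment that leaves both your correctness appeal and your resource bounds intact.
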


Using the simple observation that $OPT \geq n^2/16$ (see proof of Theorem 3.1 in \cite{GG2006}), we get that the above query algorithm gives $(1+\veps)$-approximation guarantee in the SSAC framework.

\subsection{$\MDA[k]$} 

In this section, we provide a $(1+\veps)$-approximation algorithm for the $\MDA[k]$ for any small $\veps>0$. Giotis and Guruswami \cite{GG2006} provided a $(1 + \veps)$-approximation algorithm for $\MDA[k]$. In this work, we extend their algorithm to make it work in the SSAC framework with the aid of same-cluster queries, and thereby improve the running time of the algorithm considerably. Our query algorithm will be closely based on the non-query algorithm of Giotis and Guruswami. In fact, except for a small (but crucial) change, the algorithms are the same. So, we begin by discussing the main ideas and the result by Giotis and Guruswami.

\begin{lemma}[Theorem 4.7 in \cite{GG2006}] 
For every $k\geq 2$ and $\veps > 0$, there is a $(1+\veps)$-approximation algorithm for $\MDA[k]$ with running time $n^{O(9^k/\veps^2)}\log n$.
\end{lemma}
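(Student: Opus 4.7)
The plan is to present the Giotis--Guruswami algorithm adapted from their $\MA[k]$ framework, where the main twist is that for $\MDA[k]$ the optimum may be much smaller than $n^2$, so additive error of the form $\veps n^2$ cannot simply be absorbed into a $(1+\veps)$ multiplicative bound. I would first describe the skeleton of the algorithm: partition $V$ into $m = O(1/\veps)$ equal parts $V^1,\dots,V^m$; in iteration $i$, draw a random sample $S^i \subseteq V\setminus V^i$ of size roughly $9^k/\veps^2$; enumerate all $k^{|S^i|}$ ways of partitioning $S^i$ into $k$ clusters $S^i_1,\dots,S^i_k$; and for each such guess, place each $v\in V^i$ into the cluster $S^i_j$ that minimises its disagreements with the sample. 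The final output is the best clustering obtained across all $m$ iterations.

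The core analytical step is to argue that among the $k^{|S^i|}$ partitions enumerated, one of them agrees with the restriction of a fixed optimal clustering $C^*_1,\dots,C^*_k$ to $S^i$. Given this, one shows via Chernoff--Hoeffding that for a ``large'' optimal cluster (one with at least $\veps n/k$ vertices), the sampled estimate of $v$'s disagreement with $C^*_j$ is within additive $O(\veps n/k)$ of the true value with high probability; summing the per-vertex error over $V^i$ and then over iterations yields total extra cost $O(\veps n^2)$. Small optimal clusters are handled by a separate preprocessing step: any cluster of size below $\veps n/k$ can be dissolved and its vertices reassigned at a total cost of at most $\veps n^2$, which can be absorbed.

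The principal obstacle, and the reason the argument for $\MDA[k]$ is more delicate than for $\MA[k]$, is that the resulting additive error $\veps n^2$ is not automatically $\veps \cdot OPT$ since $OPT$ can be $o(n^2)$. The plan to overcome this is the standard trick in \cite{GG2006}: run the sampling scheme with a shrinking error parameter $\veps' = \veps \cdot OPT/n^2$, which is achieved by trying all $O(\log n)$ geometric guesses for $OPT$ and picking the best feasible output. Each guess requires sample size $O(9^k/(\veps')^2)$ and hence enumeration of $k^{O(9^k/(\veps')^2)}$ partitions, giving the overall running time $n^{O(9^k/\veps^2)}\log n$ claimed in the lemma. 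The correctness then follows by a union bound over the $m$ iterations and the $O(\log n)$ guesses, with the error-accounting sketched above.
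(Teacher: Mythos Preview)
Your plan diverges substantially from the Giotis--Guruswami algorithm, and the divergence is not cosmetic: it contains a genuine gap that breaks the claimed running time.

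The structure you describe---partition $V$ into $O(1/\veps)$ pieces, sample, enumerate, greedily assign---is the $\MA[k]$ algorithm, not the $\MDA[k]$ algorithm. You then propose to convert additive $\veps' n^2$ error into multiplicative $(1+\veps)$ error by guessing $OPT$ geometrically and setting $\veps' = \veps \cdot OPT/n^2$. This fails precisely in the regime that makes $\MDA[k]$ hard: when $OPT = o(n^2)$. If, say, $OPT = \Theta(n)$ (a near-perfect instance), your $\veps'$ is $\Theta(\veps/n)$, your sample size $|S^i| = \Theta(9^k/(\veps')^2) = \Theta(9^k n^2/\veps^2)$ exceeds $n$, and the enumeration cost $k^{|S^i|}$ is $k^{\Theta(n^2)}$, nowhere near $n^{O(9^k/\veps^2)}$. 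Your ``dissolve small clusters at cost $\veps n^2$'' step has the same defect: that additive cost cannot be charged to $OPT$ when $OPT$ is small. The $\log n$ in the stated running time does not come from geometric guessing of $OPT$; it comes from the $\log n$ factor in the sample size needed for high-probability concentration.

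What Giotis and Guruswami actually do (and what the paper summarises before presenting its query version in Figure~\ref{alg1}) is structurally different. They run the $\MA[k]$ PTAS once to get a clustering $ClusMax$; this already gives a $(1+\veps)$-approximation for $\MDA[k]$ whenever $OPT$ is a constant fraction of $n^2$. For the complementary case (small $OPT$), they draw a \emph{single} sample $S$, enumerate all $k$-partitions of $S$, and greedily assign every $v \in V\setminus S$. The key observation is that this greedy step is reliable only for vertices whose optimal cluster is \emph{large} (at least $n/(2k)$); the large clusters are then frozen, and the algorithm \emph{recurses} on the union $W$ of the small clusters with the reduced parameter $s < k$. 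The better of $ClusMax$ and the recursively-built $ClusMin$ is returned. The exponential dependence $9^k$ in the running time arises from this recursion (the precision parameter tightens across levels), not from shrinking $\veps'$ against a guessed $OPT$. Your plan is missing both the $\MA[k]$ fallback and the recursion on small clusters, and without them the argument does not go through.
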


The algorithm by Giotis and Guruswami builds on the following ideas. First, from the discussion in the previous section, we know that there is a FPTAS within SSAC framework for $\MA[k]$. Therefore, unless $OPT$, the optimal value for $\MDA[k]$ is small (OPT$=\gamma n^2$, for some small $\gamma>0$), the complement solution for $\MA[k]$ would give a valid $(1+\veps)$-approximate solution for $\MDA[k]$. 
Since $OPT$ is small, this implies that the optimal value for $\MA[k]$ is large which means that for any random vertex $v$ in graph $G=(V,E)$, a lot of edges incident on $v$ agree to the optimal clustering. Suppose we are given a random subset $S\subseteq V$ of vertices that are optimally clustered $S=\{S_1,\ldots,S_k\}$, and let us assume that $S$ is sufficiently large. 
Since most of the edges in $E$ are in agreement with the optimal clustering, we would be able to 
assign vertices in $V \setminus S$ to their respective clusters greedily.
For any arbitrary $v\in V\setminus S$, assign $v$ to $S_i$ for which the number of edges that agree is maximized. 
Giotis and Guruswami observed that clustering vertices in $V\setminus S$ in this manner would work with high probability when these vertices belong to large clusters. For vertices in small clusters, we may not be able to decide assignments with high probability. They carry out this greedy assignment of vertices in $V\setminus S$ into clusters $S_1,\ldots,S_k$, and filter out clusters that are sufficiently large and recursively run the same procedure on the union of small clusters.

For any randomly sampled subset $S\subseteq V$ of vertices, Giotis and Guruswami try out all possible ways of partitioning $S$ into $k$ clusters in order to partition $S$ optimally into $k$ clusters $S_1,\ldots,S_k$.
This ensures that at least one of the partitions matches the optimal partition. However, this exhaustive way of partitioning imposes huge burden on the running time of the algorithm. In fact, their algorithm runs in $n^{O(9^k/\veps^2)}\log n$ time. Using access to the same-cluster query oracle, we can obtain a significant reduction in the running time of the algorithm. 
We query the oracle with pairs of vertices in $S$ and since query answers are assumed to be consistent with some unique optimal solution, optimal $k$ clustering of vertices in $S$ is accomplished using at most $k|S|$ same-cluster queries. 
Once we have a $k$-partitioning of sample $S$ that is consistent with the optimal $k$-clusters, we follow the remaining steps of \cite{GG2006}.
The same approximation analysis as in \cite{GG2006} follows for the query algorithm.
For completeness we give the modified algorithm in Figure~\ref{alg1}. Let oracle $\OA$ take any two vertices $u$ and $v$ as input and return `Yes' if they belong to the same cluster in optimal clustering, and `No' otherwise. 

\vspace*{-0.1in}

\begin{center}
\begin{Algorithm}
\begin{boxedminipage}{\textwidth}
$\mathsf{QueryMinDisAgree(k,\alpha)}$

Input: Labeling $\mathcal{L}:{n\choose 2} \rightarrow \{+,-\}$ of edges of graph $G(V,E)$, Oracle $\OA:{n\choose 2}\rightarrow \{\text{Yes,No}\}$.\\
Output: A $k$-clustering of $G$\\
Constants: $c_1=\frac{1}{20}$\\ \newline
(1) If $k=1$, return $1$-clustering.\\
(2) Run $\mathsf{QueryMaxAg}$ on input $\mathcal{L}$ with accuracy $\frac{\alpha^2c_1^2}{32k^4}$ to obtain $k$-clustering $ClusMax$.\\
(3) Set $\beta=\frac{c_1 \alpha}{16k^2}$. Pick sample $S\subseteq V$ of size $\frac{5\log n}{\beta^2}$ independently and uniformly at random with replacement from $V$.\\
(4) Optimally cluster $S=\{S_1,\ldots,S_k\}$ by making same-cluster queries to oracle $\OA$.\\
(5) Let $C_j=S_j$ for $1\leq j\leq k$.\\
(6) For each $u\in V\setminus S$\\
\hspace*{0.1in} (6.1) $\forall i=1,\ldots,k$: Let $l^u_i$ be the number of edges which agree between $u$ and nodes in $S_i$.\\
\hspace*{0.1in}	(6.2) Let $j_u=\arg \max_i l^u_i$ be the index of the cluster which maximizes the above quantity.\\
\hspace*{0.1in}	(6.3) $C_{j_u}=C_{j_u}\cup \{u\}$.\\
(7) Let the set of large and small clusters be $Large=\{j:1\leq j\leq k,|S_j|\geq \frac{n}{2k}\}$ and $Small=[k]\setminus \text{Large}$. \\
(8) Let $l=|Large|$, and $s=k-l$.\\
(9) Cluster $W=\cup_{j\in \text{Small}} S_j$ into $s$ clusters $\{W_1,\ldots,W_s\}$ using recursive calls to $\mathsf{QueryMinDisAgree(s,\alpha)}$.\\
(10) Let $ClusMin$ be clustering obtained by $k$ clusters $\{S_j\}_{j\in \text{Large}}$ and $\{W_t\}_{\{1\leq t\leq s\}}$.\\
(11) Return the better of $ClusMin$ and $ClusMax$
\end{boxedminipage}
\caption{Query version of the algorithm by Giotis and Guruswami.}
\label{alg1}
\end{Algorithm}
\end{center}

\vspace*{-0.35in}

Here is the main theorem giving approximation guarantee of the above algorithm. As stated earlier, the proof follows from the proof of a similar theorem (Theorem 4.7 in \cite{GG2006}) by Giotis and Guruswami.

\begin{theorem}
Let $0 < \veps \leq 1/2$. For any input labelling, $\mathsf{QueryMinDisAgree(k, \veps/4)}$ returns a $k$-clustering with the number of disagreements within a factor of $(1 + \veps)$ of the optimal.
\footnote{Readers familiar with \cite{GG2006} will realise that the statement of the theorem is slightly different from statement of the similar theorem (Theorem 13) in \cite{GG2006}. More specifically, the claim is about the function call with $\veps/4$ as a parameter rather than $\veps$. This is done to allow the recursive call in step (9) to be made with same value of precision parameter as the initial call. This does not change the approximation analysis but is crucial for our running time analysis.}
\end{theorem}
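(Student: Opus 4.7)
The plan is to adapt, essentially line by line, the proof of Theorem~13 in Giotis and Guruswami~\cite{GG2006}. The crucial observation is that step (4) of $\mathsf{QueryMinDisAgree}$ uses $k|S|$ same-cluster queries to $\OA$ to deterministically recover the partition of $S$ induced by the fixed optimal clustering; the only role of the exhaustive enumeration over $k^{|S|}$ partitions in the original algorithm is to guarantee that one trial coincides with this optimal partition, so every part of their analysis that conditions on ``having the correct partition of $S$ in hand'' transfers to $\mathsf{QueryMinDisAgree}$ verbatim, while the $k^{|S|}$ blow-up is eliminated.

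First I would split into two cases according to the value of $OPT$, using a threshold $\gamma n^2$ for an appropriate $\gamma = \Theta(\alpha^2 c_1^2 / k^4)$ where $\alpha = \veps/4$. In the large-$OPT$ case ($OPT \geq \gamma n^2$), the output $ClusMax$ of step (2) already suffices: since $\mathsf{QueryMaxAg}$ is run with accuracy $\alpha^2 c_1^2/(32 k^4)$, its complement incurs at most $OPT + O(\alpha^2 c_1^2 n^2/k^4)$ disagreements, and dividing by the lower bound on $OPT$ converts this additive slack into a $(1+\veps)$ multiplicative factor. In the small-$OPT$ case ($OPT < \gamma n^2$), I would analyze $ClusMin$: because most edges agree with the optimal labelling, a Chernoff/Hoeffding bound applied to the random sample $S$ of size $\Theta(\log n / \beta^2)$ with $\beta = c_1 \alpha /(16 k^2)$ shows that for every $u \in V \setminus S$ and every $i$, the empirical count $l^u_i$ is within $\beta |S_i|$ of its expectation. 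For vertices $u$ lying in a \emph{large} optimal cluster (size at least $n/(2k)$) this slack is negligible compared with the agreement gap between the correct cluster and any other, so the greedy assignment in step (6) is correct with probability $1 - 1/\mathrm{poly}(n)$, and a union bound handles all such $u$ at once. The remaining small-cluster vertices are re-clustered by the recursive call in step (9), and the standard Giotis--Guruswami ``filtering'' argument bounds the disagreements contributed by these vertices across recursion levels.

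The principal subtlety, and the reason the theorem is stated for parameter $\veps/4$ rather than $\veps$, is the bookkeeping of the recursion. Because step (9) is invoked with the \emph{same} $\alpha = \veps/4$ as the top-level call, the per-level additive slack produced by the greedy assignment is uniformly $O(\beta n^2) = O(\alpha c_1 n^2/k^2)$ disagreements, and summing over the at-most-$k$ recursion levels yields total slack $O(\alpha c_1 n^2/k)$. The main obstacle will be checking that this recursive slack, together with the additive error of $\mathsf{QueryMaxAg}$ absorbed in the large-$OPT$ branch, composes cleanly into a single $(1+\veps)$ factor: the factor of $4$ is calibrated so that each of the two error sources contributes at most $(\veps/2)\cdot OPT$, but re-verifying the arithmetic with the altered parameter (as opposed to Giotis and Guruswami's original $\veps$) is the only nontrivial deviation from their proof, and is precisely what the footnote flags.
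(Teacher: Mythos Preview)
Your proposal is correct and follows exactly the approach the paper takes: the paper itself gives no detailed proof, simply stating that the argument follows from Theorem~4.7 in \cite{GG2006} once step~(4) replaces the exhaustive enumeration of $k$-partitions of $S$ by the oracle-recovered optimal partition. Your sketch is in fact more explicit than what the paper provides, but the structure (large-$OPT$ handled by $\mathsf{QueryMaxAg}$, small-$OPT$ by the greedy-plus-recursion argument with Chernoff bounds on the sample $S$) and the bookkeeping around the $\veps/4$ parameter are precisely the intended adaptation.
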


Even though the approximation analysis of the query algorithm remains the same as the non-query algorithm of Giotis and Guruswami, the running time analysis changes significantly.
Let us write a recurrence relation for the running time of our recursive algorithm. Let $T(k)$ denote the running time of the algorithm when $n$ node graph is supposed to be clustered into $k$ clusters with a given precision parameter $\alpha$. Using the results of the previous subsection, the running time of step (2) is $O(\frac{nk^{13}\log k}{\alpha^6})$. The running time for partitioning the set $S$ is given by $k |S|$ which is $O(\frac{k^5 \log{n}}{\alpha^2})$. Steps (6-8) would cost $O(n k |S|)$ time which is $O(\frac{nk^5 \log{n}}{\alpha^2})$. 
So, the recurrence relation for the running time may be written as $T(k) = T(k-1) + O(\frac{nk^{13} \log k \log{n}}{\alpha^6})$. This simplifies to $O(\frac{nk^{14}\log k\log{n}}{\alpha^6})$. 
As far as the same-cluster queries are concerned, we can write a similar recurrence relation.
$Q(k) = Q(k-1) + O(\frac{k^{13}\log k \log{n}}{\alpha^6})$ which simplifies to $O(\frac{k^{14}\log k \log{n}}{\alpha^6})$.
This completes the proof of Theorem~\ref{thm:main-upper}.

\section{Conclusion and Open Problems} 
In this work, we give upper and lower bounds on the number of same-cluster queries to obtain an efficient  $(1+\eps)$-approximation for correlation clustering on complete graphs in the SSAC framework of Ashtiani et al.
Our lower bound results are based on the Exponential Time Hypothesis (ETH). 
It is an interesting open problem to give unconditional lower bounds for these problems. Another interesting open problem is to design query based algorithms with faulty oracles. 
This setting is more practical since in many contexts it may not be known whether any two vertices belong to the same optimal cluster with high confidence. 
Mitzenmacher and Tsourakakis \cite{MT2016} designed a query based algorithm for clustering where query oracle, similar to our model, answers whether any two vertices belong to the same optimal cluster or otherwise but these answers may be wrong with some probability less than $\frac{1}{2}$. 
For $k=2$, we can use their algorithm to obtain $(1+\eps)$-approximation for $\MDA[k]$ with faulty oracle. 
However, for $k>2$ they needed stronger query model to obtain good clusterings. 
Designing an efficient $(1+\eps)$-approximation algorithm for $\MDA[k]$ with faulty oracle is an interesting open problem.

\bibliographystyle{plain}
\bibliography{biblio}
\end{document}